\tikzset{
    -Latex,auto,node distance =1 cm and 1 cm,semithick,
    state/.style ={ellipse, draw, minimum width = 0.7 cm},
    point/.style = {circle, draw, inner sep=0.04cm,fill,node contents={}},
    bidirected/.style={Latex-Latex,dashed},
    el/.style = {inner sep=2pt, align=left, sloped}
}
\lstdefinestyle{mystyle}{
    commentstyle=\color{OliveGreen},
    keywordstyle=\color{BurntOrange},
    numberstyle=\tiny\color{black!60},
    stringstyle=\color{MidnightBlue},
    basicstyle=\ttfamily,
    breakatwhitespace=false,
    breaklines=true,
    captionpos=b,
    keepspaces=true,
    numbers=left,
    numbersep=5pt,
    showspaces=false,
    showstringspaces=false,
    showtabs=false,
    tabsize=2
}
\DeclareRobustCommand{\mb}[1]{\mathbold{#1}}
\DeclareMathOperator*{\argmin}{arg\,min}
\renewcommand{\mid}{~\vert~}
\newcommand{\vz}{{\mathtt{z}}}
\newcommand{\vt}{{\tilde{\mathtt{t}}}}
\newcommand{\vts}{\mathtt{t}^*}
\newcommand{\vtt}{\mathtt{t}^\prime}
\newcommand{\vthat}{\hat{\mathtt{t}}}
\newcommand{\mba}{\mb{a}}
\newcommand{\mbt}{\mb{t}}
\newcommand{\mby}{\mb{y}}
\newcommand{\mbz}{\mb{z}}
\newcommand{\mbeta}{\mb{\eta}}
\newcommand{\supp}{\textrm{supp}}
\newcommand{\sign}{\textrm{sign}}
\renewcommand{\E}{{\mathbb{E}}}
\newcommand{\cG}{\mathcal{G}}
\newcommand{\cN}{\mathcal{N}}
\newcommand{\g}{\,\vert\,}
\newcommand{\indep}{\rotatebox[origin=c]{90}{$\models$}}
\newtheorem*{assumption*}{Assumption}
\newtheorem{theorem}{Theorem}
\newtheorem*{theorem*}{Theorem}
\newcommand{\fullder}[2]{\frac{d #1}{d #2}}
\newacronym{ATE}{ate}{average treatment effect}
\newacronym[longplural = deep exponential families]{DEF}{def}{deep exponential family}
\newacronym{gwas}{gwas}{genome-wide association studies}
\newacronym{lmm}{lmm}{linear mixed model}
\newacronym{pca}{pca}{principle component analysis}
\newacronym{mci}{mte}{effects from multiple treatments}
\newacronym{vde}{vde}{variational decoupling}
\newacronym{out}{out}{outcome regression}
\newacronym{ours}{lode}{Level-set Orthogonal Descent Estimation}
\newacronym{efc}{efc}{estimation with functional confounders}
\newacronym{c-red}{c-redundancy}{causal redundancy}
\newacronym{alt-red}{mono-redundancy}{monotonic redundancy}
\newacronym{rmse}{rmse}{root-mean-squared error}
\newacronym{func-pos}{f-positivity}{functional positivity}
\newacronym{eff-pos}{l-positivity}{level set positivity}
\newacronym{noisy-efc}{noisy-efc}{estimation with noisy functional confounders}
\newacronym{ci}{obs-ci}{observational causal inference}
\newcommand{\tci}{traditional \gls{ci}\xspace}
\newcommand{\Tci}{Traditional \gls{ci}\xspace}
\newcommand{\funcpos}{\gls{func-pos}}
\newcommand{\funcints}{functional interventions}
\newcommand{\wrt}{with respect to }
\newcommand{\condeff}[1][]{
\ifthenelse{\equal{#1}{}}{conditional effect}{conditional effect#1}
}
\newcommand{\avgeff}{average effect}
\newacronym{snp}{snp}{single nucleotide polymorphism}
\newcommand{\snps}{\glspl{snp}}
\newlength\titlepageskip\setlength{\titlepageskip}{3.0mm}
\title{Causal Estimation with Functional Confounders}
\author{
Aahlad Puli\textsuperscript{1} \hspace{50pt} Adler J. Perotte\textsuperscript{2} \hspace{50pt} Rajesh Ranganath\textsuperscript{1,3}  \\
\texttt{aahlad@nyu.edu} \hspace{10pt} \texttt{adler.perotte@columbia.edu} \hspace{10pt} \texttt{rajeshr@cims.nyu.edu}\\\\
\textsuperscript{1}Computer Science, New York University, New York, NY 10011 \\
\textsuperscript{2}Biomedical Informatics, Columbia University, New York, NY 10032\\
\textsuperscript{3}Center for Data Science, New York University, New York, NY 10011\\
}
\begin{document}

\maketitle
\begin{abstract}
  Causal inference relies on two fundamental assumptions: \emph{ignorability} and \emph{positivity}.
  We study causal inference when the true confounder value can be expressed as a function of the observed data; we call this setting \emph{\gls{efc}}.
  In this setting ignorability is satisfied, however positivity is violated, and causal inference is impossible in general.
  We consider two scenarios where causal effects are estimable.
  First, we discuss interventions on a part of the treatment called \emph{\funcints} and a sufficient condition for effect estimation of these interventions called \emph{\acrlong{func-pos}}.
  Second, we develop conditions for nonparametric effect estimation based on the gradient fields of the functional confounder and the true outcome function.
  To estimate effects under these conditions, we develop \gls{ours}.
  Further, we prove error bounds on \gls{ours}'s effect estimates, evaluate our methods on simulated and real data, and empirically demonstrate the value of \gls{efc}.
\end{abstract}

\section{Introduction}

Determining the effect of interventions on outcomes using observational data lies at the core of many fields like medicine, economic policy, and genomics.
For example, policy makers estimate effects to elect whether to invest in education or job training programs.
In medicine, doctors use effects to design optimal treatment strategies for patients.
Geneticists perform \gls{gwas} to relate genotypes and phenotypes.
In observational data, there could exist unobserved variables that affect both the intervention and the outcome, called confounders.
A necessary condition for the causal effect to be identified is that all confounders are observed; called \emph{ignorability}.
If ignorability holds, a sufficient condition for causal effect estimation is adequate variation in the intervention after conditioning on the confounders; this condition is called \emph{positivity}.

The data apriori does not differentiate between confounders and interventions.
It is the practitioners that select interventions of interest from all pre-outcome variables (variables that occur before the outcome).
Then, assuming knowledge of the data generating mechanism, practitioners can label certain variables amongst the remaining pre-outcome variables as confounders.
This corresponds to indexing into the set of pre-outcome variables.

In certain problems the confounders are specified as 
a function of the pre-outcome variables that does not simply index
into the set of pre-outcome variables. For a concrete example, consider
\gls{gwas}. The goal in \gls{gwas} is to estimate the influence of
genetic variations on phenotypes like disease risk.
In \gls{gwas},
population and family structures both result in certain genetic variations and affect phenotypes and therefore, are confounders~\cite{astle2009population}.
Practitioners specify these confounders by using the genetic similarity between individuals~\cite{lippert2011fast,price2006principal,yu2006unified}, which is
a function of the genetic variations.
When the confounders are a function of the same pre-outcome variables that define the interventions, positivity is violated.
Then, the class of interventions whose effects are estimable is not well-defined.

We study causal effect estimation in such settings, where a function of the
pre-outcome variables provides the confounder and these same pre-outcome variables
define the intervention. We call this \glsreset{efc}\gls{efc}.
In \gls{efc}, one column in the observed data is the outcome and all others are  pre-outcome variables. 
We assume access to a function $h(\cdot)$ that takes as input the pre-outcome variables
and returns the value of the confounder.
Further, we assume these confounders give us ignorability.
In settings like \gls{gwas}, the function $h$ reflects the practitioner-specified function that captures the genetic variation influenced by the population structure.
In \tci, $h(\cdot)$ reflects the selection of certain variables in the data and labelling them as confounders.
In \gls{efc}, two different values of the confounder are never observed for the same setting of the pre-outcome variables.
This means that positivity is violated and the effects of only certain interventions may be estimable.

We address this issue in two ways.
First, we investigate a class of plausible interventions that are \emph{functions} of the observed pre-outcome variables, called \funcints.
We develop a sufficient condition to estimate the effects of said \funcints, called \glsreset{func-pos}\funcpos.
Second, we consider intervening on all pre-outcome variables, called the \emph{full} intervention.
We develop a sufficient condition to estimate the effect of the \emph{full} intervention, called \gls{c-red}.
For an intervention, given a confounder value, \gls{c-red} allows us to compute a surrogate intervention such that the conditional effect of the surrogate is equal to that of the original intervention.
We also show that such surrogate interventions exist only under a certain condition that we call Effect Connectivity, that is necessary for nonparametric effect estimation in \gls{efc}.
This condition is satisfied by default in \tci if ignorability and positivity hold.
Then, we develop an algorithm for causal estimation assuming \gls{c-red}, called \glsreset{ours}\gls{ours}, which estimates effects using surrogate interventions.
If the surrogate is not estimated well, \gls{ours}'s estimates are biased.
We establish bounds on this bias that capture the mitigating effect of the smoothness of the true outcome function.

\paragraph{Related work}

The problem of \glsreset{gwas}\gls{gwas} is to estimate the effect of genetic variations(also called \glspl{snp}) on the phenotype~\cite{visscher201710}.
The ancestry of the subjects acts as a confounder in \gls{gwas}.
In \gls{gwas} practice, \gls{pca} and \glspl{lmm} are used to compute this confounding structure~\cite{price2006principal,yu2006unified}.
\citet{lippert2011fast} suggest estimating the confounders and effects on \emph{separate} subsets of the \glspl{snp}.
This separation disregards the confounding that is captured in the interaction of the two subsets of \glspl{snp}.
\Gls{gwas} is a special case of \glsreset{mci}\gls{mci} where the confounder value
is specified via optimization as a function of the pre-outcome variables~\cite{ranganath2018multiple,wang2018blessings}.
In all these settings, positivity is violated and not all effects are estimable.
We provide an avenue for nonparametric effect-estimation of the full intervention under a new condition, \gls{c-red}.

\paragraph{\glsreset{ci}\Tci review}\label{sec:obs-ci}
We setup causal inference with Structural Causal Models \citep{pearl2009causal} and use $do(\mbt=\vts)$ to denote making an intervention.
Let $\mbt$ be a vector of the interventions, $\mbz$ be the confounder, and $\mby$ be the outcome.
Let $\mbeta \sim p(\mbeta) (\mbeta \indep (\mbz, \mbt))$ be noise.
With $f$ as the \emph{outcome function}, we define the causal model for \tci as
\footnote{We focus on $f$ that generates $\mby$ from $\mbt, \mbz$. SCMs generally specify the function that generates $\mbt$ from $\mbz$ also.}:
\begin{align*}
  \mbz \sim p(\mbz), \quad \mbt \sim p(\mbt \g \mbz), \quad y = f(\mbt, \mbz, \mbeta).
\end{align*}
Let $p(y,\mbz,\mbt)$ denote the joint distribution implied by this data generating process.
The effects of interest under the full intervention $do(\mbt=\vts)$ are the average and~\emph{\condeff}
\begin{align}
  (\text{average})\quad &\tau(\vts)=\E_{\mbz, \mbeta}f(\vts, \mbz, \mbeta)
  \quad \quad \quad 
   (\text{conditional}) \quad \phi(\vts,\vz)=\E_{\mbeta}\left[f(\vts, \vz, \mbeta)\right].
\end{align}
With observed confounders, two assumptions make causal estimation possible: \emph{ignorability} and \emph{positivity}.
Ignorability means that \emph{all} confounders $\mbz$ are observed in data.
Conditioning on all the confounders, the outcome under an intervention is distributed as if conditional on the value of the intervention:
$p(\mby = y_1 \g do(\mbt=\vts), \mbz=\vz) = p(f(\vts, \vz, \mbeta) = y_1) = p(\mby = y_1 \g \mbt=\vts, \mbz=\vz) $.
This allows the expression of \avgeff\ as an expectation over the \emph{observed} outcomes
$\tau(\vts) = \E_{\mbz, \mbeta}[f(\vts, \mbz, \mbeta)] = \E_{\mbz}\E[y \g \mbz, \vts].$
The conditional expectation only exists for all $\vts$ if $p(y \g \mbz, \mbt=\vts) = \nicefrac{p(y, \mbz, \mbt=\vts)}{p(\mbz) p(\mbt =\vts \g \mbz)}$ exists.
\emph{Positivity} guarantees this existence 
\begin{align}
(\text{positivity})\quad \forall \vts \in \supp(\mbt) \quad p(\mbz=\vz)>0\implies  p(\mbt =\vts \g \mbz = \vz) > 0.
\label{eq:positivity}
\end{align}

\section{\Acrlong{efc}}\label{sec:est-without-positivity}
In \tci, causal estimation relied on knowing the confounders.
In this section, we consider settings where confounders are known via a function of the pre-outcome variables $h(\mbt)=\mbz$. 
We call this setting~\glsreset{efc}\emph{\gls{efc}}.
An example of this is~\gls{gwas}, where \glspl{snp} (the pre-outcome variables) are used to estimate the confounding population structure through methods like \gls{pca}~\citep{yu2006unified}. 
Assuming the confounders are a function of the pre-outcome variables violates positivity in general.
Positivity is violated in this setting because
\[\forall t_1, t_2 \in \supp(\mbt) \,\,\, s.t. \,\,\, h(t_2) \not = h(t_1)\,\, \implies p(\mbz = h(t_2) \g \mbt=t_1) = 0 \not = p(\mbz = h(t_2)) > 0\]
In words, two different confounder values cannot occur for the same $t$.
A positivity violation precludes nonparametric effect estimation of the full intervention $do(\mbt=\vts)$.

\paragraph{Positivity and Regression Identifiability}
Positivity can be viewed as providing identifiability. 
To see this, let the confounder be $\mbz=h(\mbt)$ and the outcome be
$y(\mbt, \mbz, \mbeta) = \mbz + h(\mbt).$
Now consider regressing $\mbz$ and $\mbt$ onto $y$.
Then, functions $y = \alpha \mbz + \beta h(\mbt)$ indexed by $\alpha,\beta$, such that $\alpha + \beta=2$, are consistent with the observed data.
Thus, there exist infinitely many solutions to the conditional expectation of $y$ on $(\mbt,\mbz)$, meaning that the regression is not identifiable.
Assuming positivity necessitates sufficient randomness to identify the regression and thus the causal effect.
A violation of positivity means that nonparametric estimation of causal effects needs further assumptions.

\subsection{Setup for \gls{efc}}
In \gls{efc}, the confounder is provided as a non-bijective function $h$ of the
pre-outcome variables $\mbt$.
To reflect this property, we use $h(\mbt)$ to denote the confounder.
As an illustrative example, let $\cG$ be the Gamma distribution and consider $\mbz\in\{-1,1\}, p(\mbz=1) = 0.5$ is the confounder and the intervention of interest is $\mbt = \mbz*\cG(1,\exp(\mbz))$.
Note $\sign(\mbt) = \mbz$ meaning that $h(\mbt) = \sign(\mbt)$ is the confounder.
\Cref{fig:clarify-ht} shows causal graphs connecting our \gls{efc} notation to that in \tci.
\begin{figure}[t]  
\centering
  \begin{subfigure}[b]{0.3\textwidth}
    \begin{tikzpicture}
    \node[state] (2) {$\mbt$};
    \node[state] (3) [right =of 2] {$\mby$};
    \node[state] (1) [above right =of 2,xshift=-0.7cm,yshift=-0.7cm] {$\mbz$};

    \path (1) edge node[above] {} (2);
    \path (2) edge node[above] {} (3);
    \path (1) edge node[above] {} (3);
\end{tikzpicture}
    \caption{\Tci} \label{fig:M1}  
  \end{subfigure}
\begin{subfigure}[b]{0.3\textwidth}
    \begin{tikzpicture}
    \node[state] (2) {$\mbt$};
    \node[state] (3) [right =of 2] {$\mby$};
    \node[state] (1) [above right =of 2,xshift=-0.7cm,yshift=-0.7cm] {$h(\mbt)$};

    \path (1) edge node[above] {} (2);
    \path (2) edge node[above] {} (3);
    \path (1) edge node[above] {} (3);
\end{tikzpicture}
\caption{\gls{efc}} \label{fig:M2}  
\end{subfigure}
\begin{subfigure}[b]{0.3\textwidth}
    \begin{tikzpicture}
    \node[state] (2) {$\mbt$};
    \node[state] (3) [right =of 2] {$\mby$};
    \node[state] (1) [above right =of 2,xshift=-0.7cm,yshift=-0.7cm] {$h(\mbt)$};

    \path (2) edge node[above] {} (3);
    \path (1) edge node[above] {} (3);
\end{tikzpicture}
\caption{Intervening in \gls{efc}} \label{fig:M2}  
\end{subfigure}
\caption{Causal Graphs for \Tci\ vs. \gls{efc}.}
\label{fig:clarify-ht}
\vspace{-0.4cm}
\end{figure} 
With noise $\mbeta \sim p(\mbeta) (\mbeta\indep \mbt)$, our causal model samples, in order, the confounder "part" of pre-outcome variables $h(\mbt)$, the pre-outcome variables $\mbt$, and the outcome $\mby$ via the \emph{outcome function} $f$
\footnote{We also assume no interference~\cite{hernan2020causal} (also called Stable Unit Treatment Value Assumption~\cite{rubin1980randomization}) which means that an individual's outcome does not depend on others' treatment.
In \gls{efc}, when $\mbt$ and $\mbeta$ are sampled IID there is no interference.
To see this, note $\forall i,j\,\, (\mbt_i, \mbeta_i)\indep (\mbt_j, \mbeta_j)  \implies (\mby_i ,\mbt_i) \indep (\mby_j, \mbt_j) \implies \mby_i \indep \mbt_j$.}:
\[h(\mbt) \sim p(h(\mbt)) \quad \mbt~\sim p(\mbt \g h(\mbt)) \quad \mby = f(\mbt, h(\mbt), \mbeta)\]
Similar to \tci, for an intervention $\vts$ the average effect, $\tau(\cdot)$, and the conditional effect, $\phi(\cdot, \cdot)$ at $h(\vts_2)$, respectively, are defined as:
\begin{align}\label{eq:full-effs}
  \quad \tau(\vts)=\mathop{\E}_{h(\mbt), \mbeta}[f(\vts, h(\mbt),\mbeta)], 
   \quad \quad \quad \phi(\vts,h(\vts_2))=\mathop{\E}_{\mbeta}[f(\vts, h(\vts_2),\mbeta)].
\end{align}
As the pre-outcome variables determine the confounder, positivity is violated. 
Further, the \emph{outcome function} $f(\mbt, h(\mbt), \mbeta)$ could recover the exact value of $h(\mbt)$ from $\mbt$ instead of its second argument.
Thus, two different outcome functions could lead to the same observational data distribution, posing a fundamental obstacle to causal effect estimation.
This is the central challenge in \gls{efc}.

\subsection{Causal Questions With Functional Positivity}\label{sec:qs-without-pos}
Without positivity, we can only estimate the effects of certain functions of $\mbt$.
We call such interventions, on some function $g(\mbt)$, \emph{\funcints}.
The implied causal model for the outcome for functional intervention
value $g(\vts)$ and confounder value $h(\vts_2)$
is first $\mbt \sim p(\mbt \g g(\mbt) = g(\vts), h(\mbt) = h(\vts_2))$ and then $\mby = f(\mbt, h(\vts_2), \mbeta)$
\footnote{Intervening on $g(\mbt)$ can be interpreted as making a \emph{soft intervention}~\citep{eberhardt2007interventions,correa2020calculus} of $\mbt$ to $p(\mbt\g \mbz, g(\mbt)=g(\vt))$.
}.
Then, the \textit{functional} average effect is
\begin{align*}
  (\text{average})\quad &\tau(g(\vts))={\E}_{h(\mbt), \mbeta}\E_{\mbt \g g(\mbt) = g(\vts), h(\mbt)}[f(\mbt, h(\mbt), \mbeta)].
\end{align*}
An example of a functional intervention is intervening on the cumulative dosage of a drug.
In contrast, traditional interventions would set each individual dose given at different points in time.

\paragraph{\gls{func-pos} and Functional Effect Estimation}
For the causal model above to be well-defined for all \funcints~$g(\vts)$, the conditional $p(\mbt \g g(\mbt) = g(\vts), h(\mbt) = h(\vts_2))$ must exist.
To guarantee this existence, we define \emph{\glsreset{func-pos}\gls{func-pos}} for any $g(\vts)$
\begin{align}\label{eq:functional-positivity}
(\text{\gls{func-pos}})\quad p(h(\mbt) = h(\vts_2))>0 \Longrightarrow  p(g(\mbt) = g(\vts) \g h(\mbt) = h(\vts_2)) > 0 .
\end{align}
\Gls{func-pos} says that the function of the pre-outcome variables that is being intervened on needs to have sufficient randomness when the function of the pre-outcome variables that defines the confounders is fixed.
Further, under \gls{func-pos}, effect estimation for \funcints\ is reduced to \tci\ on data $p(\mby, g(\mbt), h(\mbt))$. 
With positivity and ignorability satisfied, traditional causal estimators such as propensity scores \citep{rosenbaum1983central}, matching \citep{ratkovic2014balancing}, regression~\citep{hill2010bayesianci}, and doubly robust methods \citep{robins2000robust} can be used to estimate the causal effect.
Focusing on regression, let $f_\theta$ be a flexible function, then
$\min_{\theta} \E_{y, \mbt} [(\mby - f_\theta(h(\mbt), g(\mbt)))^2]$
would estimate the conditional expectation of interest : $\E[\mby\g h(\mbt), g(\vts)]$.
With $\theta$, the effect of $g(\vts)$ can be estimated by averaging the estimate of the conditional expectation over the marginal distribution $p(h(\mbt))$:
\begin{align}
  \tau(g(\vts)) = \E_{\mbt}[f_\theta(h(\mbt), g(\vts))].
  \label{eq:causal-estimate-positivity}
\end{align}

\section{Identification of effects of the full intervention}\label{sec:full-est}
When positivity is violated, causal effects cannot be estimated as conditional expectations over the observed data in general.
We give a functional condition, called \glsreset{c-red}\gls{c-red}, that allows us to estimate the effect of the full intervention $do(\mbt=\vts)$, even when positivity is violated.
Specifically, \gls{c-red} allows us to construct a \textit{surrogate intervention} $\vtt(\vts, h(\vts_2))$ whose conditional effect at $h(\vtt)$ matches the conditional effect of interest, $\phi(\vts, h(\vts_2))$.
Let $\vt$ be a fixed value of the full intervention, then \gls{c-red} is
\begin{assumption*}\label{assn:c-redundancy}
  Recall the outcome $y = f(\vt, h(\vt), \eta)$. With $\nabla_\vt$ as gradient w.r.t.\ to argument $\vt$:
  \[\forall \vt, h(\vt_2), \eta, \quad  \nabla_{\vt} f (\vt, h(\vt_2), \eta)^T\nabla_{\vt} h(\vt) = 0.\]
\end{assumption*}
In words, \gls{c-red} is the condition that the outcome function $f$ uses the value of the confounder from its second argument instead of computing $h(\mbt)$ from the first argument\footnote{If $f$ transforms its first argument $\vt$ into $h(\vt)$ as one amongst many different computations, the chain rule implies $\nabla_\vt f(\vt, h(\vts_2))^\top \nabla_\vt h(\vt)$ has a term $\|\nabla_\vt h(\vt)\|^2 $ which is non-zero in general.}.
To compute the conditional effect $\phi(\vts, h(\vts_2))$, we develop \glsreset{ours}\gls{ours}.
\gls{ours}'s key step is to construct a surrogate intervention $\vtt(\vts,h(\vts_2))$ such that
  \begin{align*}
    \begin{split}
      \phi(\vts, h(\vts_2)) &=  \phi({\vtt(\vts, h(\vts_2))}, h(\vts_2)), \quad \quad 
        h(\vts_2) =   h(\vtt(\vts, h(\vts_2))) .
    \end{split}
  \end{align*}

\begin{wrapfigure}[9]{r}{0.32\textwidth}
   \vspace{-25pt}
   \centering
   \includegraphics[width=0.3\columnwidth]{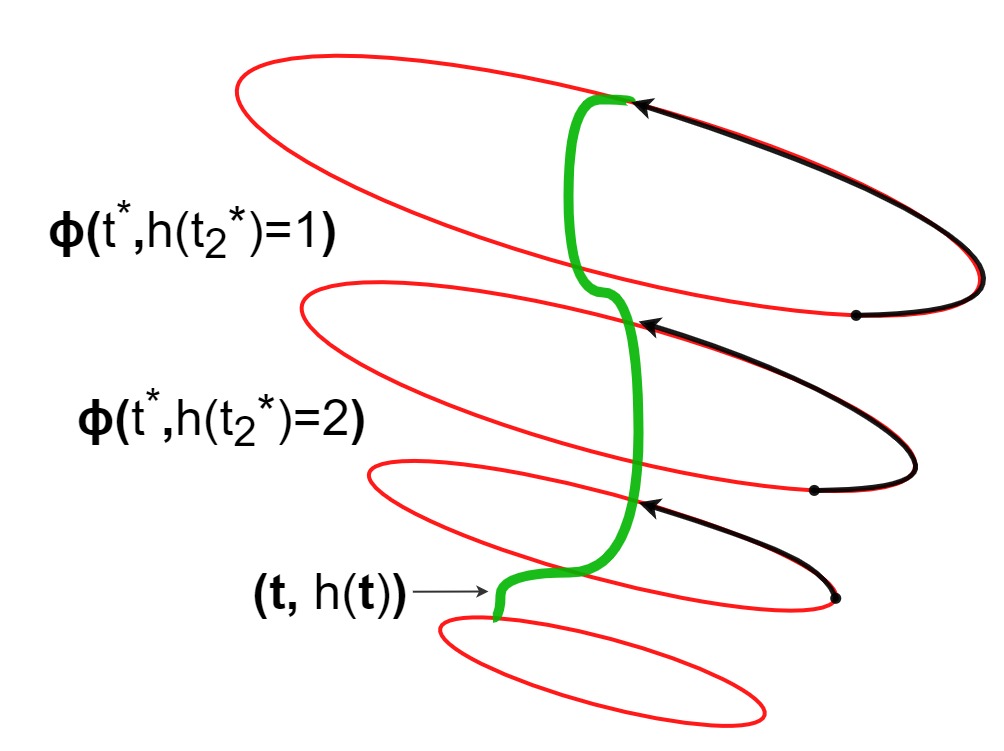}
   \caption{\gls{ours}'s traversal.}
   \label{fig:level-sets}
\end{wrapfigure}
By definition, a surrogate intervention lives in the conditional effect level-set: $\{\vt: \phi(\vt, h(\vts_2))= \phi(\vts, h(\vts_2))\}$.
So \gls{ours} searches this level-set for $\vtt(\vts, h(\vts_2))$.
See~\cref{fig:level-sets} which plots the conditional effect level-sets with the value of $h(\mbt)$ fixed (\textcolor{red}{red}) in $(\supp(\mbt), \supp(h(\mbt)))$-space.
\textcolor{green}{Green} corresponds to the observed data, $\supp(\mbt, h(\mbt))$.
\gls{ours} finds $\vtt(\vts, h(\vts_2))$ by traversing the level-sets (\textbf{black}) to account for the confounder part mismatch $h(\vts)\not=h(\vts_2)$.
\gls{c-red} ensures \gls{ours} can traverse these level-sets as it implies $\nabla_{\vt}\phi(\vt, h(\vt_2))\nabla_{\vt}h(\vt) = 0$ under the regularity conditions in~\cref{thm:thm-c-red}.
Thus, under \gls{c-red}, surrogate interventions can be constructed by solving a gradient flow equation which guarantees identification as follows:
\newcommand{\lemmaone}{

  Assume \gls{c-red} holds. Assuming the following:
  \begin{enumerate}[leftmargin=0.5cm]
    \item Let $\vtt(\vts, h(\vts_2))$ be the limiting solution to the gradient flow equation $\fullder{\vt(s)}{s} = -\nabla_{\vt}(h(\vt(s)) - h(\vts_2))^2$, initialized at $\vt(0)=\vts$; i.e. $\vtt(\vts, h(\vts_2)) = \lim_{s\rightarrow \infty } \vt(s)$.\\
    Further, let $h(\vtt(\vts, h(\vts_2))) = h(\vts_2)$
    and $\vtt(\vts, h(\vts_2))\in \supp(\mbt)$.
    \item $f(\vt, h(\vt), \eta)$ and $h(\vt)$ as functions of $\vt, h(\vt)$ are continuous and differentiable and the derivatives exist for all $\vt,\eta$.
    Let $\nabla_{\vt} f(\vt, h(\vt), \eta)$ exist and be bounded and integrable w.r.t.\ the probability measure corresponding to $p(\mbeta)$, for all values of $\vt$ and $h(\vt)$.
  \end{enumerate}
  Then the conditional effect (and therefore the average effect) is identified:
\begin{align}\label{eq:effect-estimates}
  \begin{split}
  \phi(\vts,h(\vts_2)) = \phi\left(\vtt(\vts, h(\vts_2)), h(\vtt(\vts, h(\vts_2)))\right)= \E\left[\mby\g \mbt = {\vtt(\vts, h(\vts_2))}\right]
  \end{split}
  \end{align} 
  }
\begin{theorem}\label{thm:thm-c-red}
  \lemmaone{}
\end{theorem}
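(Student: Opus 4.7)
The plan is to show that $\phi(\vt, h(\vts_2))$ is constant along the gradient flow trajectory $\vt(s)$, so that taking $s\to\infty$ transports the identity from $\vts$ (where the confounder part disagrees with $h(\vts_2)$) to $\vtt(\vts, h(\vts_2))$ (where, by assumption, it agrees).

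First I will compute $\tfrac{d}{ds}\phi(\vt(s), h(\vts_2))$ along the flow. Using the chain rule and the interchange of derivative and expectation (justified by the second regularity assumption plus dominated convergence, since $\nabla_\vt f$ is bounded and integrable w.r.t.\ $p(\mbeta)$), I get
\begin{align*}
\fullder{}{s}\phi(\vt(s),h(\vts_2))
&= \E_\mbeta\!\left[\nabla_\vt f(\vt(s), h(\vts_2), \mbeta)^\top \fullder{\vt(s)}{s}\right].
\end{align*}
The gradient flow direction is $\fullder{\vt(s)}{s} = -2(h(\vt(s))-h(\vts_2))\,\nabla_\vt h(\vt(s))$, so this equals
\[
-2(h(\vt(s))-h(\vts_2))\,\E_\mbeta\!\left[\nabla_\vt f(\vt(s), h(\vts_2), \mbeta)^\top\nabla_\vt h(\vt(s))\right],
\]
which vanishes by \gls{c-red} applied pointwise in $\mbeta$. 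Hence $\phi(\vt(s), h(\vts_2))$ is constant in $s$, and in particular equals its value at $s=0$: $\phi(\vts, h(\vts_2)) = \phi(\vt(s), h(\vts_2))$ for all $s\ge 0$. Passing to the limit $s\to\infty$ using continuity of $f$ in its first argument (and bounded convergence inside the expectation defining $\phi$) yields $\phi(\vts, h(\vts_2)) = \phi(\vtt(\vts,h(\vts_2)), h(\vts_2))$.

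Next I invoke the assumption $h(\vtt(\vts, h(\vts_2))) = h(\vts_2)$ to rewrite this as $\phi(\vtt, h(\vtt))$, where I abbreviate $\vtt = \vtt(\vts, h(\vts_2))$. Finally I identify this with a conditional expectation over observed data. Since $\vtt \in \supp(\mbt)$ and in \gls{efc} the confounder is a deterministic function of $\mbt$, conditioning on $\mbt=\vtt$ pins down the confounder value to $h(\vtt)$; using $\mby = f(\mbt, h(\mbt), \mbeta)$ and $\mbeta \indep \mbt$,
\[
\E[\mby \mid \mbt=\vtt] = \E_\mbeta[f(\vtt, h(\vtt), \mbeta)] = \phi(\vtt, h(\vtt)),
\]
giving the full chain of equalities in \eqref{eq:effect-estimates}. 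The average effect then follows by marginalizing over $h(\mbt)$.

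The main obstacle I anticipate is making the limit argument rigorous: the theorem only assumes the limiting solution \emph{exists} and lies in $\supp(\mbt)$, not quantitative convergence, so I would argue via continuity of $\phi(\cdot, h(\vts_2))$ (inherited from continuity of $f$ in its first argument together with boundedness justifying a dominated-convergence swap) to pass the constant value of $\phi$ on the flow trajectory to its limit point. A secondary subtlety is that \gls{c-red} is stated pointwise in $\eta$, so I can pull the $\nabla_\vt h(\vt(s))$ factor (which does not depend on $\mbeta$) outside the expectation and then apply \gls{c-red} \emph{inside} the expectation — this is the cleanest way to avoid any circularity with the interchange justification.
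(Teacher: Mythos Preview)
Your proposal is correct and follows essentially the same approach as the paper: both show that $\phi(\vt(s), h(\vts_2))$ is constant along the gradient-flow trajectory by pairing the flow direction $\propto \nabla_\vt h(\vt(s))$ with $\nabla_\vt f$ and invoking \gls{c-red}, then use $h(\vtt)=h(\vts_2)$ and $\vtt\in\supp(\mbt)$ to land on the observable conditional expectation. The only cosmetic difference is that the paper packages the constancy argument as a vanishing line integral via the Gradient Theorem, whereas you differentiate $\phi(\vt(s),h(\vts_2))$ directly in $s$; these are equivalent, and your explicit treatment of the $s\to\infty$ limit via continuity is, if anything, slightly more careful than the paper's.
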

In words, the key idea is that starting at $\vt(0) = \vts$ and following $\nabla_\vt h(\vt)$ means $\vt(s)$ always lies in the level-set $\{\vt : \phi(\vt, h(\vts_2)) = \phi(\vts, h(\vts_2))\}$.
See~\cref{appsec:thm-c-red} for the proof.
While \gls{c-red} is stated in terms of the gradient of the outcome function, it suffices for~\cref{thm:thm-c-red} to assume a weaker condition about the gradient of the conditional effect: $\nabla_\vt \E_{\mbeta} f(\vt, \vt_2, \mbeta)^\top \nabla_{\vt} h(\vt) = 0$.

\paragraph{Surrogate Positivity}
In~\cref{thm:thm-c-red}, we assumed that the surrogate $\vtt(\vts, h(\vts_2)) \in \supp(\mbt)$.
This condition, which we call surrogate positivity (analogous to positivity), states that for any intervention and confounder, surrogate interventions that are limiting solutions to the gradient flow equation have nonzero density conditional on the confounder value.
Formally, for any intervention $\mbt =\vts$
\begin{align}\label{eq:surr-pos}
  p(h(\mbt) = h(\vts_2)) > 0 \implies p(\mbt = \vtt(\vts, h(\vts_2)) \g h(\mbt) = h(\vts_2)) > 0,
\end{align}
and $\vtt(\vts, h(\vts_2))$ satisfies assumption 1 in~\cref{thm:thm-c-red}.
Surrogate positivity  along with \gls{c-red}, is sufficient for full effect estimation under \gls{efc}.
Next, we show that the positivity assumption in traditional causal inference is a special case of surrogate positivity.

 \paragraph{\glsreset{ci}\Tci and \gls{ours}}
 
 Let the confounder and intervention of interest in \tci be $\mbz$ and $\mba$ respectively. Assume both are scalars and ignorability and positivity hold.
 This setup can be embedded in \gls{efc} by defining the vector of pre-outcome variables as:
 $ 
   \mbt = [\mba; \mbz].
 $
 In this setting, \gls{c-red} and surrogate positivity(\cref{eq:surr-pos}) hold by default.
 Let the outcome be $\mby  = f(\mbt, h(\mbt)) = f(\mba, \mbz)$, where $f$ only depends on the first element of $\mbt$, i.e.\ $\mba$\footnote{We ignore noise in the outcome for ease of exposition.}.
 Let $e_1 = [1, 0]$ and $e_2 = [0, 1]$.
 In \tci as \gls{efc}, $\nabla_{\vt} f(\vt, h(\vts_2)) \propto e_1$ and $\nabla_{\vt} h(\vt) \propto e_2$ meaning that $\nabla_{\vt} f(\vt, h(\vts_2)) ^\top \nabla_{\vt} h(\vt) = 0$.
 Thus, \gls{c-red} holds by default.
 Moreover, under positivity of $\mba$ w.r.t.\ $\mbz$, we also have surrogate positivity for \tci as an \gls{efc} problem.
In this setting, \gls{ours} computes $\vtt=[a^*, h(\vts_2)]$ by following $-\nabla_{\vt} h(\vt) = [0, -1]$, which only changes the value of $h(\vt_2)$, not the value of $a$.
 Thus, $\vts$ and $\vtt(\vts, h(\vts_2))$ will have the same first element and $\vtt$'s second element will be $h(\vts_2)$. As $\mba$ has positivity w.r.t.\ $\mbz$, we have $p(\mba = a^* , \mbz=h(\vts_2))>0$ which means $\vtt\in\supp(\mbt)$. The estimated conditional effect is $\E[\mby \g \mbt = \vtt(\vts, h(\vts_2))] = f([a^*, z^*], h(\vts_2)) = \E[\mby \g \mba = a^*, \mbz=h(\vts_2))]$, which matches the estimate in \tci.

 \paragraph{Implementation of \gls{ours}}
 \Gls{ours} first estimates the conditional expectation $\E[\mby\g \mbt]$; this can be done with model-based or nonparametric estimators.
 This is achieved by regressing $\mby$ on $\mbt$, $\hat{f} = \argmin_{u\in \mathcal{F}} \E_{\mby, \mbt \sim D} (\mby - u(\mbt))^2 $, with empirical distribution $D$.
 The surrogate intervention $\vtt(\vts, h(\vts_2))$ is computed using Euler integration to solve the gradient flow equation.
 Euler integration in this setting is equivalent to gradient descent with a fixed step size.
 Other, more efficient schemes like Runge–Kutta numerical integration methods~\cite{ascher1998computer} could also be used.
 The conditional effect estimate is $\hat{f}(\vtt(\vts, h(\vts_2)))$. See~\cref{alg:full-eff-estimation} for a description.

\subsection{Estimation error of \gls{ours} in practice}\label{sec:estimation-error}
To compute the surrogate intervention $\vtt$, \gls{ours} uses the gradients of $h(\cdot)$ in Euler integration.
In practice, taking Euler integration steps, instead of solving the gradient flow exactly, could result in errors.
Then $\vtt$ could lie outside the level-set of the conditional effect $\phi(\vts, h(\vts_2)) =  \E_{\mbeta}[f(\vts, h(\vts_2), \mbeta)]$.
Further, if $h(\vtt(\vts, h(\vts_2)))\not = h(\vts_2)$, \gls{ours} incurs error for conditioning on a value of the confounder that is different from $h(\vts_2)$.
The error due to $\vtt$ estimation is decoupled from the error in the estimation of $\E[\mby \g \mbt]$ which adds without further amplification.
We formalize this error:
\newcommand{\theoremmain}{
  Consider the conditional effect $ \phi(\vts, h(\vts_2))$.
  Let $\vthat(\vts, h(\vts_2))$ be the estimate of the surrogate intervention computed by \gls{ours}, computed via Euler integration of the gradient flow $\fullder{\vt(s)}{s} = - \nabla_\vt (h(\vt(s)) - h(\vts_2))^2$, initialized at $\vt(0) = \vts$.
  Assume the true surrogate $\vtt(\vts, h(\vts_2))$ exists and is the limiting solution to the gradient flow equation.
  \begin{enumerate}
    \item Let the finite sample estimator of $\E[\mby\g \mbt=\vt]$ be  $\hat{f}(\vt)$. 
    Let  
      the error for all $\vt$ be bounded, $|\hat{f}(\vt) - \E[\mby\g \mbt=\vt] |\leq c(N)$, where $N$ is the sample size and $ \lim_{ N\rightarrow \infty } c(N) = 0 $.
    \item Assume $K$ Euler integrator steps were taken to find the surrogate estimate $\vthat(\vts, h(\vts_2))$, each of size $\ell$.
        Let the maximum confounder mismatch be $\max_{i\leq K}(h(\vt_i) - h(\vts_2))^2 = M$.
    \item Let $L_{z,\vt}$ be the Lipschitz-constant of $\phi(\vt, h(\vt_2))$ as a function of $h(\vt_2)$, for fixed $\vt$.\\
     Let $L_e$ be the Lipschitz-constant of $\E[\mby \g \mbt=\vt] = \phi(\vt, h(\vt))$ as a function of $\vt$.
     \\
      Assume $h$ has a gradient with bounded norm, $\|\nabla h(\vt)\|_2 < L_h$.
    \\
    Assume $f$'s Hessian has bounded eigenvalues: $\forall \vt, \vt_2,\,\, \| \nabla^2_\vt \phi(\vt, h(\vt_2))\|_2 \leq \sigma_{\mathtt{H} \phi} $.
  \end{enumerate}
  The conditional effect estimate error, $\xi(\vts,h(\vts_2)) = |\hat{f}(\vthat) - \phi(\vts, h(\vts_2))|$, is upper bounded by:
  \begin{align}\label{eq:est-err-bound}
    c(N) +   \min\left(L_e\|\vtt - \vthat\|_2, 
                          \,\, 2K \ell^2 \left(\mathcal{O}(\ell) + M \sigma_{\mathtt{H} \phi } L^2_{h}  \right) +  L_{z,\vthat} \|h(\vthat) - h(\vts_2)\|_2\right)
  \end{align}
}
\begin{theorem}\label{thm:main-effect-bound}
\theoremmain
\vspace{2pt}
\end{theorem}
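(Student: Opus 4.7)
My plan is to reduce the statement to a sum of three controlled pieces by triangle inequality and then attack the main (Euler-integration) piece using a telescoping Taylor expansion that exploits \gls{c-red}. The decomposition is
\[
|\hat{f}(\vthat) - \phi(\vts,h(\vts_2))|
\;\leq\;
|\hat{f}(\vthat) - \E[\mby\g\mbt=\vthat]|
\;+\;
|\E[\mby\g\mbt=\vthat] - \phi(\vts,h(\vts_2))|.
\]
Assumption~1 immediately bounds the first summand by $c(N)$. Since $\mby=f(\mbt,h(\mbt),\mbeta)$, we have $\E[\mby\g\mbt=\vthat]=\phi(\vthat,h(\vthat))$, so it remains to bound $\Delta := |\phi(\vthat,h(\vthat))-\phi(\vts,h(\vts_2))|$ by the minimum of the two quantities in the theorem.

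For the first expression inside the min, I would invoke \cref{thm:thm-c-red} to write $\phi(\vts,h(\vts_2))=\E[\mby\g\mbt=\vtt]$, and then directly apply the Lipschitz constant $L_e$ of $\vt\mapsto\E[\mby\g\mbt=\vt]$ to obtain $\Delta\leq L_e\|\vthat-\vtt\|_2$. For the second expression, I would insert an intermediate pivot,
\[
\Delta \leq |\phi(\vthat,h(\vthat)) - \phi(\vthat,h(\vts_2))| + |\phi(\vthat,h(\vts_2)) - \phi(\vts,h(\vts_2))|,
\]
bounding the first summand by $L_{z,\vthat}\|h(\vthat)-h(\vts_2)\|_2$ via Lipschitz continuity in the second argument. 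The second summand is the drift of $\phi(\cdot,h(\vts_2))$ along the Euler trajectory $\vts=\vt_0,\vt_1,\dots,\vt_K=\vthat$, which I would control by a telescoping sum $\sum_{i=0}^{K-1}|\phi(\vt_{i+1},h(\vts_2))-\phi(\vt_i,h(\vts_2))|$ and a per-step second-order Taylor expansion about $\vt_i$.

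Each Euler step has the closed form $\vt_{i+1}-\vt_i=-2\ell(h(\vt_i)-h(\vts_2))\nabla_{\vt}h(\vt_i)$, so $\|\vt_{i+1}-\vt_i\|_2\leq 2\ell\sqrt{M}L_h$. The key cancellation is that the first-order Taylor term is proportional to $\nabla_{\vt}\phi(\vt_i,h(\vts_2))^{\top}\nabla_{\vt}h(\vt_i)$, which vanishes by the conditional-effect form of \gls{c-red} (the weaker condition noted immediately after \cref{thm:thm-c-red}, obtained by exchanging $\nabla_{\vt}$ and $\E_{\mbeta}$ using the boundedness and integrability of $\nabla_{\vt}f$). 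What survives is the Hessian remainder, bounded by $\tfrac{1}{2}\sigma_{\mathtt{H}\phi}\|\vt_{i+1}-\vt_i\|_2^{2}\leq 2\ell^{2}ML_h^{2}\sigma_{\mathtt{H}\phi}$, plus an $\mathcal{O}(\ell^{3})$ cubic remainder; summing over $K$ steps gives the advertised $2K\ell^{2}(\mathcal{O}(\ell)+M\sigma_{\mathtt{H}\phi}L_h^{2})$. Taking the minimum over the two bounds and adding $c(N)$ yields \cref{eq:est-err-bound}. The main obstacle is making the first-order cancellation rigorous: the naive Euler bound would give only $\mathcal{O}(\ell)$ per step, and it is precisely \gls{c-red} that upgrades this to $\mathcal{O}(\ell^{2})$; care is also needed to swap gradient and expectation so that the orthogonality passes from $f$ to $\phi$ under the smoothness and integrability hypotheses in assumption~3.
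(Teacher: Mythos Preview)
Your proposal is correct and follows essentially the same route as the paper: the same triangle-inequality decomposition into the $c(N)$ regression error plus $|\phi(\vthat,h(\vthat))-\phi(\vts,h(\vts_2))|$, the same pivot through $\phi(\vthat,h(\vts_2))$ with the $L_{z,\vthat}$ Lipschitz bound on the confounder-mismatch piece, the same telescoping second-order Taylor expansion along the Euler iterates with the first-order term killed by \gls{c-red}, and the same alternative $L_e$-Lipschitz bound via $\phi(\vts,h(\vts_2))=\phi(\vtt,h(\vtt))$. The only cosmetic difference is that the paper writes out the quadratic form $\nabla_{\vt}h(\vt_i)^{\top}\nabla_{\vt}^{2}\phi\,\nabla_{\vt}h(\vt_i)$ explicitly before bounding by $\sigma_{\mathtt{H}\phi}L_h^{2}$, whereas you go straight to the norm bound; the arithmetic and final constants are identical.
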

See \cref{appsec:lode-error} for the proof.
\Cref{thm:main-effect-bound} captures the trade-off between biases due to conditioning on the wrong confounder value and due to the accumulated error in solving the gradient flow equation.
This accumulated error analysis may be loose in settings where the sum of many gradient steps lead to $\vthat \approx \vtt$, even if each step individually induces large error.
In such settings, the term that depends on $\|\vthat - \vtt\|_2$ is a better measure of error.
The maximum-mismatch $M$ appears because Euler integrator takes steps that depend on the magnitude of the gradient which depends on the mismatch value $(h(\vt_i) - h(\vts_2))$.
If mismatch is large for some $i$, the Euler step could lead to a large error for a fixed step size $\ell$.
We discuss the assumptions in~\cref{thm:thm-c-red,thm:main-effect-bound} in~\cref{appsec:assumptions}

\subsection{Effect Connectivity and the Existence of $\vtt(\vts, h(\vts_2))$}
The key element in \Cref{thm:thm-c-red} is the surrogate intervention $\vtt$ such that its conditional effect given $h(\vtt)$, equals that of $\vts$ and $ h(\vts_2)$.
The orthogonality $\nabla_{\vt} f^\top \nabla_{\vt} h =0 $, is a functional condition that does not guarantee $\vtt(\vts, h(\vts_2))$ exists in $\supp(\mbt)$; a necessity to compute $\E[\mby\g \mbt = \vtt]$ without additional parametric assumptions.
We give a general condition called \emph{Effect Connectivity} that guarantees the surrogate intervention exists.
With conditional effect $\phi(\vts, h(\vts_2))$, for any $\vts$
\begin{align}\label{eq:weird-func-pos-condition}
p(h(\mbt)=h(\vts_2))>0 \implies p(\phi(\mbt, h(\mbt)) = \phi(\vts,h(\vts_2))\g h(\mbt) = h(\vts_2)) > 0.
\end{align}
In words, $\mbt$ has a chance of setting the conditional effect to any possible value $\supp( \phi(\mbt, h(\mbt_2)))$ given any confounder value $h(\vts_2) \in \supp(h(\mbt))$.
An equivalent statement is that every level set of the conditional effect $\phi(\vts, h(\vts_2))$, with $h(\vts_2)$ fixed, contains an intervention for each confounder value.
That is, for some $h(\vts_2)$ define the level set $A_c = \{\vts; f(\vts, h(\vts_2))=c\}$, then $\forall h(\vts_2)\in \supp(h(\mbt)),\, p(\mbt \in A_c \g h(\mbt) =h(\vts_2) ) > 0$.
\newcommand{\sufficiencylemma}{
 Under Effect Connectivity, \cref{eq:weird-func-pos-condition}, any surrogate intervention $\vtt(\vts,h(\vts_2)) \in \supp(\mbt)$.
  }
\begin{theorem}\label{thm:thm-sufficiency}
  \sufficiencylemma{}
\end{theorem}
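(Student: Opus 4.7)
The plan is to unpack Effect Connectivity at the specific value $c = \phi(\vts, h(\vts_2))$ and show that this forces the set of surrogate candidates to have nonempty intersection with $\supp(\mbt)$. Recall that by definition a point $\vtt$ is a surrogate intervention for $(\vts, h(\vts_2))$ exactly when $h(\vtt) = h(\vts_2)$ and $\phi(\vtt, h(\vtt)) = \phi(\vts, h(\vts_2))$. Writing
\[
S \;=\; \{\vt : h(\vt) = h(\vts_2),\; \phi(\vt, h(\vt)) = \phi(\vts, h(\vts_2))\},
\]
the theorem reduces to showing $S \cap \supp(\mbt) \neq \emptyset$ whenever $p(h(\mbt) = h(\vts_2)) > 0$.

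The first step is to interpret the Effect Connectivity condition (\cref{eq:weird-func-pos-condition}) for this particular $c$. Instantiating with the value $c = \phi(\vts, h(\vts_2))$ gives
\[
p\bigl(\phi(\mbt, h(\mbt)) = \phi(\vts, h(\vts_2)) \,\big|\, h(\mbt) = h(\vts_2)\bigr) \;>\; 0.
\]
The event on the left-hand side is precisely $\{\mbt \in S\}$, because conditioning on $h(\mbt) = h(\vts_2)$ already enforces the first defining property of $S$ and the equality $\phi(\mbt, h(\mbt)) = c$ enforces the second. Thus $S$ has strictly positive mass under the conditional law $p(\mbt \g h(\mbt) = h(\vts_2))$, and hence under the marginal $p(\mbt)$ as well.

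The final step is the standard measure-theoretic fact that any measurable set with positive mass must meet the support of the measure: if $S \cap \supp(\mbt)$ were empty, then $S$ would lie in the complement of $\supp(\mbt)$, which has $p$-measure zero by the definition of support, contradicting $p(\mbt \in S) > 0$. Pick any $\vtt \in S \cap \supp(\mbt)$; by construction $\vtt$ satisfies both surrogate conditions and lies in $\supp(\mbt)$, proving the claim.

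The only real subtlety lies in this last measure-theoretic step: one must verify that $S$ is a measurable set and that the notion of $\supp(\mbt)$ used is the standard one (the smallest closed set of full measure), so that ``positive measure implies intersection with support'' is rigorous. Under the mild regularity of $h$ and $\phi$ already assumed in \cref{thm:thm-c-red} (continuity and differentiability), $S$ is defined by equalities of continuous functions and is hence closed, so measurability is immediate and the argument goes through without further assumption.
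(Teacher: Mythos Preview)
Your proof is correct and follows essentially the same approach as the paper: instantiate Effect Connectivity at $c=\phi(\vts,h(\vts_2))$ to obtain positive conditional probability of the event $\{\phi(\mbt,h(\mbt))=c,\ h(\mbt)=h(\vts_2)\}$, and conclude that a surrogate $\vtt$ with the required properties exists in $\supp(\mbt)$. The paper's version is more terse---it simply asserts ``this implies $\exists\,\vtt\in\supp(\mbt)$'' without spelling out the positive-measure-meets-support step you carefully justify---so your added measure-theoretic detail is a genuine improvement in rigor, not a deviation in strategy.
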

  We give the proof in~\cref{appsubsec:sufficiency}.
  Whether the intervention $\vtt(\vts, h(\vts_2))$ can be found via tractable search is problem-specific.
  If the surrogate $\vtt(\vts, h(\vts_2))$ exists $\forall \vts,h(\vts_2)$, then \cref{eq:weird-func-pos-condition} holds by definition of the surrogate.
  Effect Connectivity allows us to reason about values of $f$ anywhere in $\supp(\mbt)\times\supp(h(\mbt))$ using only samples from $p(\mby, \mbt)$.
  Further, it is necessary in \gls{efc}:
  \newcommand{\necessitylemma}{
  Effect Connectivity is necessary for nonparametric effect estimation in \gls{efc}.
   }
 \begin{theorem}\label{thm:thm-necessity}
   \necessitylemma{}
 \end{theorem}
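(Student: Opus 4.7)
The plan is to prove Effect Connectivity's necessity by contraposition: assuming Effect Connectivity fails at some pair $(\vts, h(\vts_2))$ with $p(h(\mbt)=h(\vts_2))>0$, I would exhibit two distinct outcome functions $f_1$ and $f_2$ that induce the same observational distribution $p(\mby, \mbt)$ but yield different values of the conditional effect $\phi(\vts, h(\vts_2))$. This is the standard way to establish nonparametric non-identifiability. Write $c = \phi(\vts, h(\vts_2))$. Negating \cref{eq:weird-func-pos-condition} says $p(\phi(\mbt, h(\mbt)) = c \mid h(\mbt) = h(\vts_2)) = 0$, so since $\E[\mby \mid \mbt] = \phi(\mbt, h(\mbt))$, the target value $c$ is not attained by the observed regression on any positive-measure subset of the $h(\vts_2)$-slice of $\supp(\mbt)$. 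Intuitively, the observational distribution offers no ``anchor'' tying a candidate model's prediction at $(\vts, h(\vts_2))$ to the true $c$.

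For the alternative model, I would take $f_1 = f$ and define
\[
  f_2(\vt, \vz, \mbeta) = f(\vt, \vz, \mbeta) + \delta \cdot g(\vt, \vz),
\]
for a nonzero scalar $\delta$, where $g$ vanishes on the observational diagonal $\{(\vt, h(\vt)) : \vt \in \supp(\mbt)\}$ and satisfies $g(\vts, h(\vts_2)) \neq 0$. A natural explicit choice is $g(\vt, \vz) = \psi(\vz - h(\vt))$ with $\psi(0) = 0$ and $\psi$ nonvanishing elsewhere. Because $g(\mbt, h(\mbt)) = 0$ for every $\mbt \in \supp(\mbt)$, we get $f_2(\mbt, h(\mbt), \mbeta) = f(\mbt, h(\mbt), \mbeta)$ pointwise, so $f_2$ induces the same $p(\mby, \mbt)$ as $f$. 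At the target pair, however, $\phi_2(\vts, h(\vts_2)) = c + \delta\,g(\vts, h(\vts_2)) \neq c$, so the two outcome functions are observationally indistinguishable but disagree on the causal quantity of interest.

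The main obstacle is a measure-theoretic edge case in which $h(\vts) = h(\vts_2)$, so $(\vts, h(\vts_2))$ lies on the diagonal as a set, yet Effect Connectivity still fails because, e.g.\ for continuous $\mbt$, the intervention value $\vts$ has zero mass under $p(\mbt \mid h(\mbt) = h(\vts_2))$. In this case the simple choice $g(\vt, \vz) = \psi(\vz - h(\vt))$ vanishes at the target, and the perturbation must instead be supported on a null subset of $\supp(\mbt)\times\supp(h(\mbt))$; such a perturbation leaves $p(\mby, \mbt)$ unchanged but needs to be phrased within a suitably permissive nonparametric class of outcome functions. Once this subtlety is handled, the construction exhibits two observationally equivalent outcome functions that disagree on $\phi(\vts, h(\vts_2))$, establishing the required non-identification and hence the necessity of Effect Connectivity.
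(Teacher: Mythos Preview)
Your contrapositive strategy and the shape of the argument---build an alternative outcome function that agrees with $f$ on the diagonal $\{(\vt,h(\vt)):\vt\in\supp(\mbt)\}$ but differs at the target---is exactly what the paper does. The difference is in the perturbation. The paper works with the \emph{violation set} $B\subset\supp(\mbt)\times\supp(h(\mbt))$ on which Effect Connectivity fails and sets $f_2(\vt,h(\vt_2))=f(\vt,h(\vt_2))+10\cdot\mathbf{1}_{B}(\vt,h(\vt_2))$; it then argues that diagonal points $(\vt,h(\vt))$ cannot lie in $B$, so $f$ and $f_2$ coincide on all observed data while differing on $B$. Your perturbation $g(\vt,\vz)=\psi(\vz-h(\vt))$ is supported on the entire off-diagonal region rather than on $B$.

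That choice is cleaner to write down, but notice that it never uses the hypothesis: your construction shows that \emph{every} off-diagonal effect is unidentified in raw \gls{efc}, regardless of whether Effect Connectivity holds there. That still delivers the theorem for off-diagonal targets, but it does not isolate what Effect Connectivity buys---which is the paper's point. The paper's $\mathbf{1}_B$ perturbation makes the link explicit and, more importantly, resolves your ``edge case'' automatically: instead of trying to perturb on a null set when $h(\vts)=h(\vts_2)$, the paper shows that the failure set $B$ is itself disjoint from the diagonal, because any diagonal point $(\vt,h(\vt))$ satisfies the Effect Connectivity inequality with $\vt$ as its own witness. Your null-set perturbation idea for that case is not really a fix---modifying $f$ on a null subset of $\supp(\mbt)\times\supp(h(\mbt))$ typically does not change the conditional effect at the target either, so the two models would not disagree where you need them to. If you want to keep a smooth $g$, you can recover the paper's argument by multiplying your $\psi(\vz-h(\vt))$ by a bump supported near $B$, which reinstates the use of the hypothesis and sidesteps the on-diagonal issue.
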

  We prove this in~\cref{appsubsec:necessity-proof}.
  Effect Connectivity ensures that causal models with different causal effects have different observational distributions.
  Then, parametric assumptions on the causal model are not necessary to estimate effects.

\section{Experiments}\label{sec:exps}
We evaluate \gls{ours} on simulated data first and show that \gls{ours} can correct for confounding.
We also investigate the error induced by imperfect estimation of the surrogate intervention in \gls{ours}.
Further, we run \gls{ours} on a \gls{gwas} dataset~\citep{wellcome2007genome} and demonstrate that \gls{ours} is able to correct for confounding and recovers genetic variations that have been reported relevant to Celiac disease~\citep{dubois2010multiple,sollid2002coeliac,hunt2008novel,adamovic2008association}.
\subsection{Simulated experiments}
We investigate different properties of \gls{ours} on simulated data where ground truth is available.
Let the dimension of $\mbt$ (pre-outcome variables) be $T=20$ and outcome noise be $\mbeta \sim \cN(0,0.1)$.
We consider two \gls{efc} causal models, denoted by $A$ and $B$ with different $h(\mbt)$ and $f(\mbt, h(\mbt), \mbeta)$:
\begin{align*}
  (A) \quad &  h(\mbt) = \gamma \frac{\sum_i \mbt_i}{\sqrt{T}},\quad \quad \mbt\sim \cN(0, \sigma^2\mathbb{I}^{T\times T}),
    \quad y = \frac{\sum_i(-1)^i \mbt_i}{\sqrt{T}} +  \alpha h(\mbt)^2 + (1+\alpha)h(\mbt) +  \mbeta\\
  (B) \quad  &  h(\mbt) = \mathop{\textstyle \sum}_{ i:i \in 2\mathbb{Z}} \gamma \mbt_i\mbt_{i+1},\quad \mbt\sim \cN(0, \sigma^2\mathbb{I}^{T\times T}),
      \quad y = \frac{\sum_i (-1)^i \mbt^2_i}{\sqrt{T}} +  \alpha h(\mbt)+ \mbeta
\end{align*}
In both causal models, \gls{c-red} is satisfied.
The constant $\gamma$ controls the strength of the confounder and the constant $\alpha$ controls the Lipschitz constant of the outcome as a function of the confounder.
We let the variance $\sigma^2 = 1$, unless specified otherwise.
In the following, we train on $1000$ samples and report conditional effect \gls{rmse}, computed with another $1000$ samples.
We used a degree-2 kernel ridge regression to fit the outcome model as a function of $\mbt$.
This model is correctly specified, and so the conditional $\E[\mby\g \mbt = \vt]$ can be estimated well.
We compare against a baseline estimate of conditional effect that is the same outcome model's estimate of $\E[\mby \g \mbt=\vts]$.
This baseline fails to account for confounding and produces a biased estimate of the conditional effect of $do(\mbt=\vts)$, conditional on any $h(\vts_2)\not=h(\vts)$.
\begin{figure*}[t!]
    \centering
    \begin{subfigure}[b]{0.45\textwidth}
        \centering
        \includegraphics[width=\textwidth]{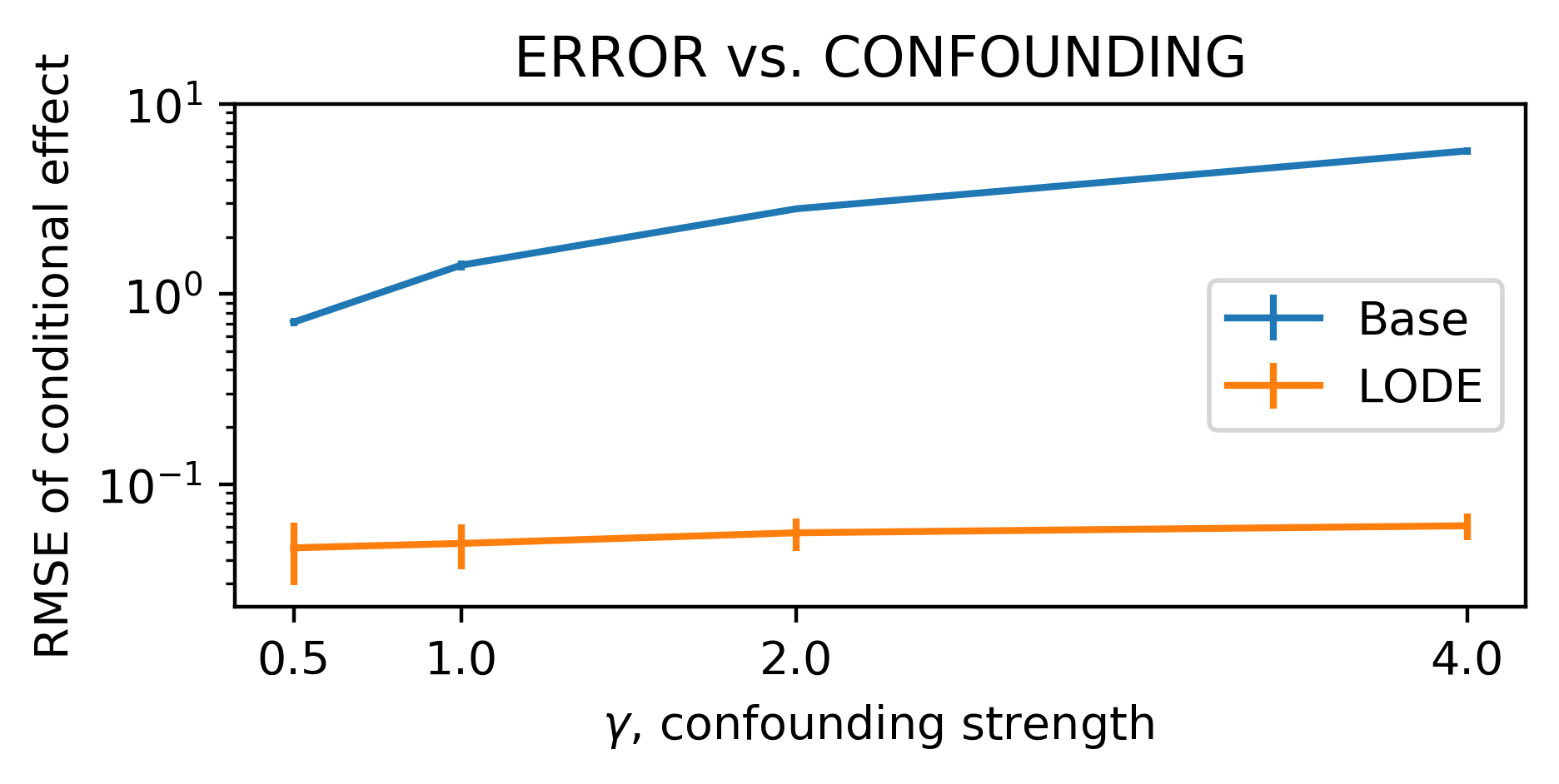}
        \caption{Causal Model $A$}
    \end{subfigure}
    ~ 
    \begin{subfigure}[b]{0.45\textwidth}
        \centering
        \includegraphics[width=\textwidth]{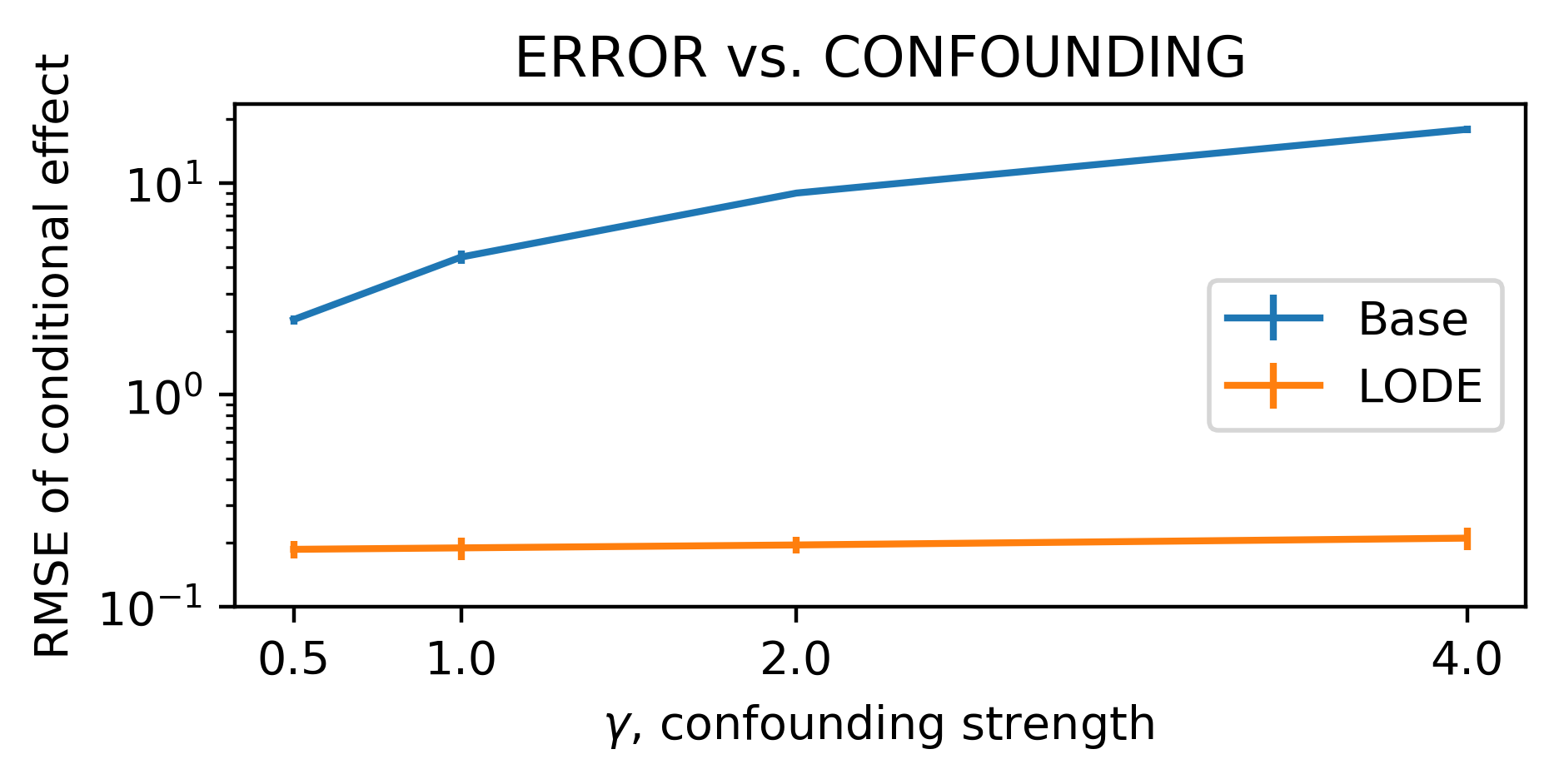}
        \caption{Causal Model $B$}
    \end{subfigure}
    \caption{\Gls{rmse} of estimated \condeff{} vs. strength of confounding $\gamma$.
    \gls{ours} corrects for confounding and produces good effect estimates across different values of $\gamma$.}
    \label{fig:lode-est}
\end{figure*}

First, we investigate how well \gls{ours} can correct for confounding for both causal models.
We let $\alpha=1$ and obtain surrogate estimates by Euler integrating until the quantity $\E_{\vts, h(\vts_2)}(h(\vt(s))- h(\vts_2))^2$ is smaller than $10^{-4}$ times value at initialization, where $\E_{\vts, h(\vts_2)}$ is expectation over the evaluation set.
In~\cref{fig:lode-est}, we plot the mean and standard deviation of conditional effect \gls{rmse} averaged over $10$ seeds, for different strengths of confounding.
We see that \gls{ours} is able to estimate effects well across multiple strengths of confounding while the baseline suffers.
\begin{figure}[t]
  \centering
  \begin{subfigure}[b]{0.45\textwidth}
      \centering
      \includegraphics[width=\textwidth]{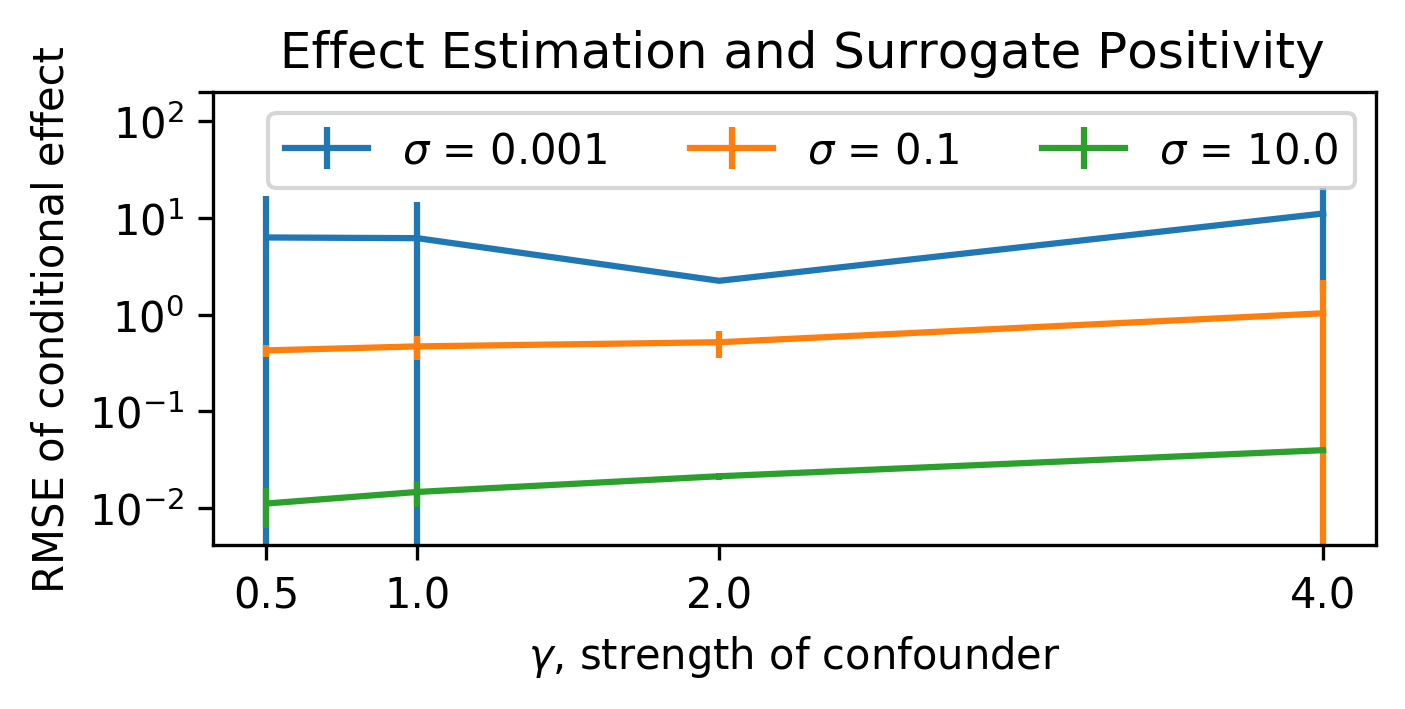}
      \caption{Causal Model $A$}
  \end{subfigure}
  ~ 
  \begin{subfigure}[b]{0.45\textwidth}
      \centering
      \includegraphics[width=\textwidth]{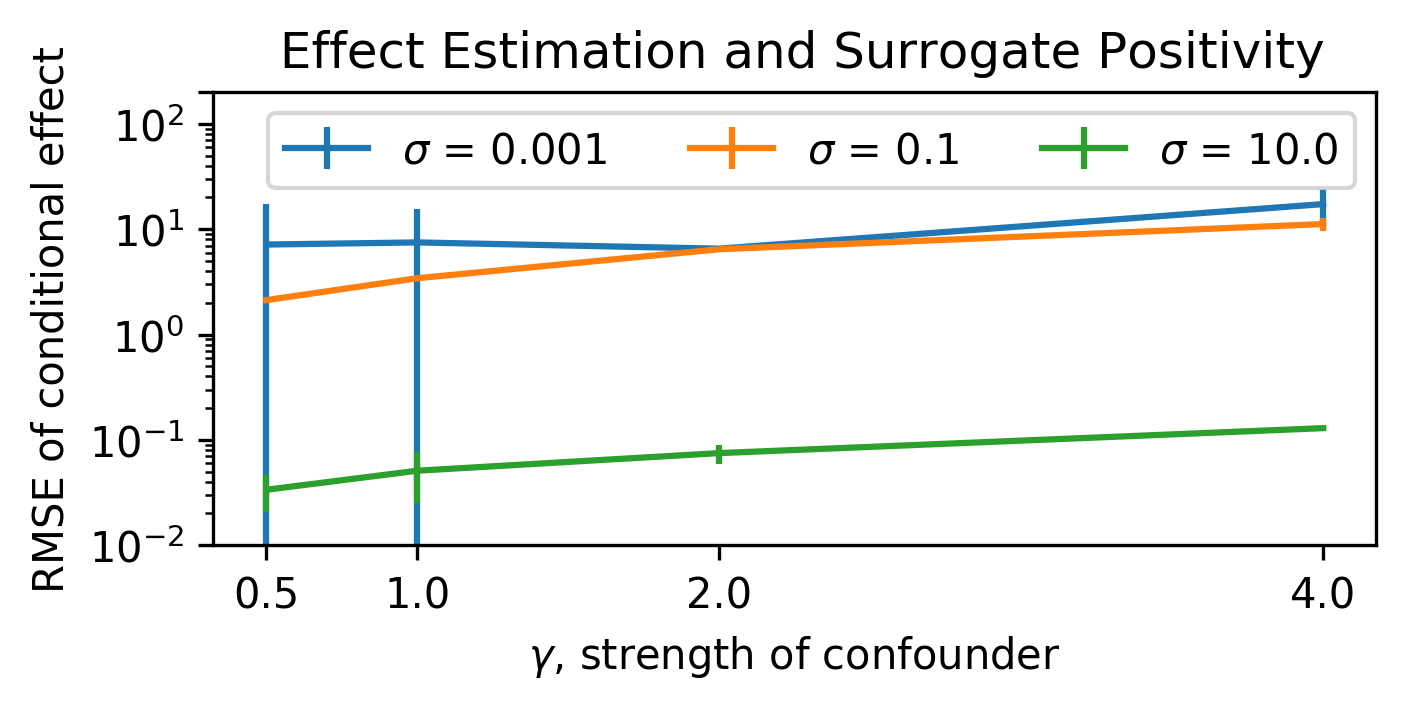}
      \caption{Causal Model $B$}
  \end{subfigure}
  \caption{\gls{rmse} of estimated \condeff{} estimate vs. the strength of confounding $\gamma$, for different levels of variance of $\mbt$, $\sigma^2$.
  Small $\sigma$ leads to large conditional estimation error.}
  \label{fig:eff-pos-experiment}
\end{figure}
\begin{wrapfigure}[14]{r}{0.42\textwidth}
  \centering
  \includegraphics[width=0.42\textwidth]{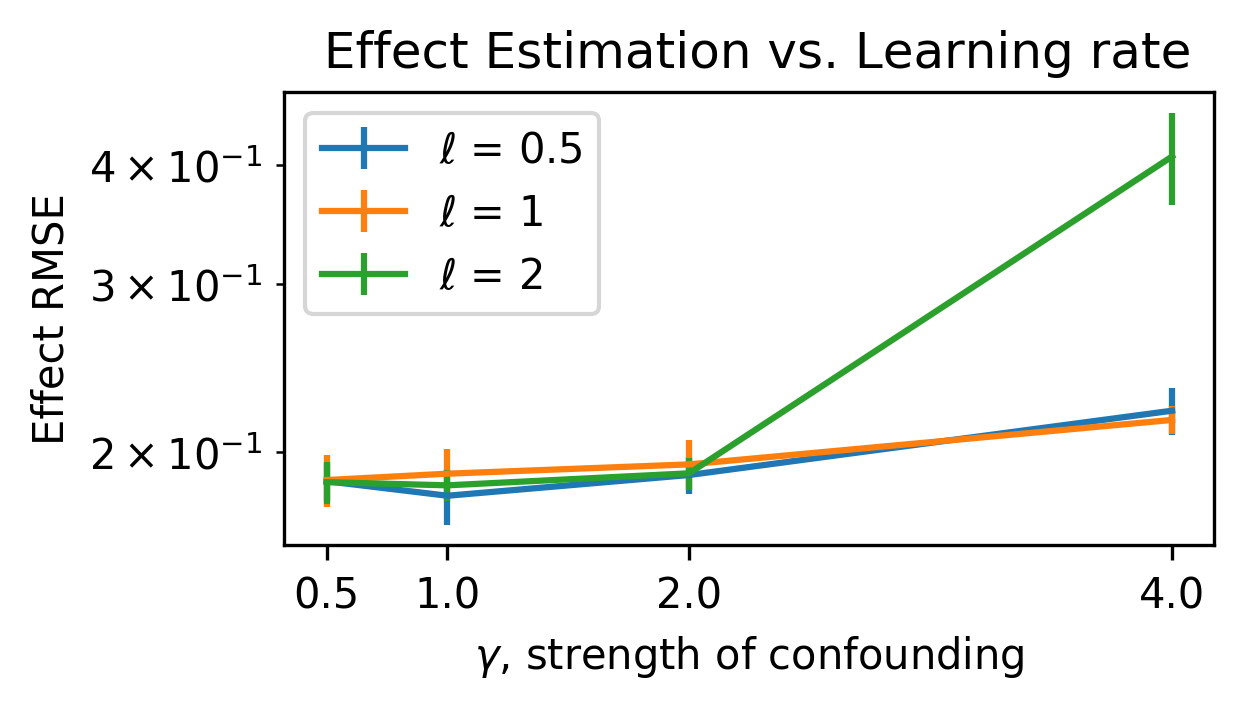}
  \caption{
    \Gls{rmse} of estimated \condeff{} vs. step size in Euler Integrator in causal model $B$.
    Accumulating error due to large step size in Euler integrator increases with strength of confounding.
  }
  \label{fig:lr-error}
  \end{wrapfigure}

Second, we investigate \gls{ours}'s estimation when surrogate positivity holds but the probability $p(\mbt \approx \vtt(\vts, h(\vts_2)))$ is very small.
This results in estimation error due to poor fitting of the outcome model in low density regions of $\supp(\mbt)$.
We run \gls{ours} on simulated data where $\mbt$ is generated with different variances ($\sigma^2$).
For small $\sigma$, the outcome model error is large when using surrogate interventions $\vtt(\vts,h(\vts_2))$, where either $h(\vts_2)$ or $\vts$ is large.
This leads to high variance effect estimation as we show in ~\cref{fig:eff-pos-experiment} for both causal models.
For various variances of $\mbt$, $\sigma^2$, we plot the mean and standard deviation of \gls{rmse} of estimated conditional effect over $10$ seeds, against different $\gamma$.

Third, we investigate the bias induced due to imperfect estimation of the surrogate intervention in \gls{ours} for both causal models.
We construct surrogate interventions $\vtt(\vts, h(\vts_2))$ by ensuring there is confounder-value mismatch $h(\vt)\not = h(\vts_2)$. We do this by interrupting Euler integration when the objective
    $\mathop{\E}_{\vts, h(\vts_2)}(h(\vtt(\vts, h(\vts_2))) - h(\vts_2))^2 = \delta^2 > 0,$
 where the $\E_{\vts, h(\vts_2)}$ is over our evaluation set upon which we estimate conditional effects.
For different $\alpha$, we plot in~\cref{fig:est-err-confmismatch} the mean and standard deviation of \gls{rmse} of estimated conditional effect over $10$ seeds, against different degrees of confounder mismatch, $\delta$.
The error due to confounder mismatch is mitigated by small $\alpha$, the Lipschitz-constant of the outcome as a function of $h(\mbt)$.
\begin{figure*}[t]
    \centering
    \begin{subfigure}[b]{0.45\textwidth}
        \centering
        \includegraphics[width=\textwidth]{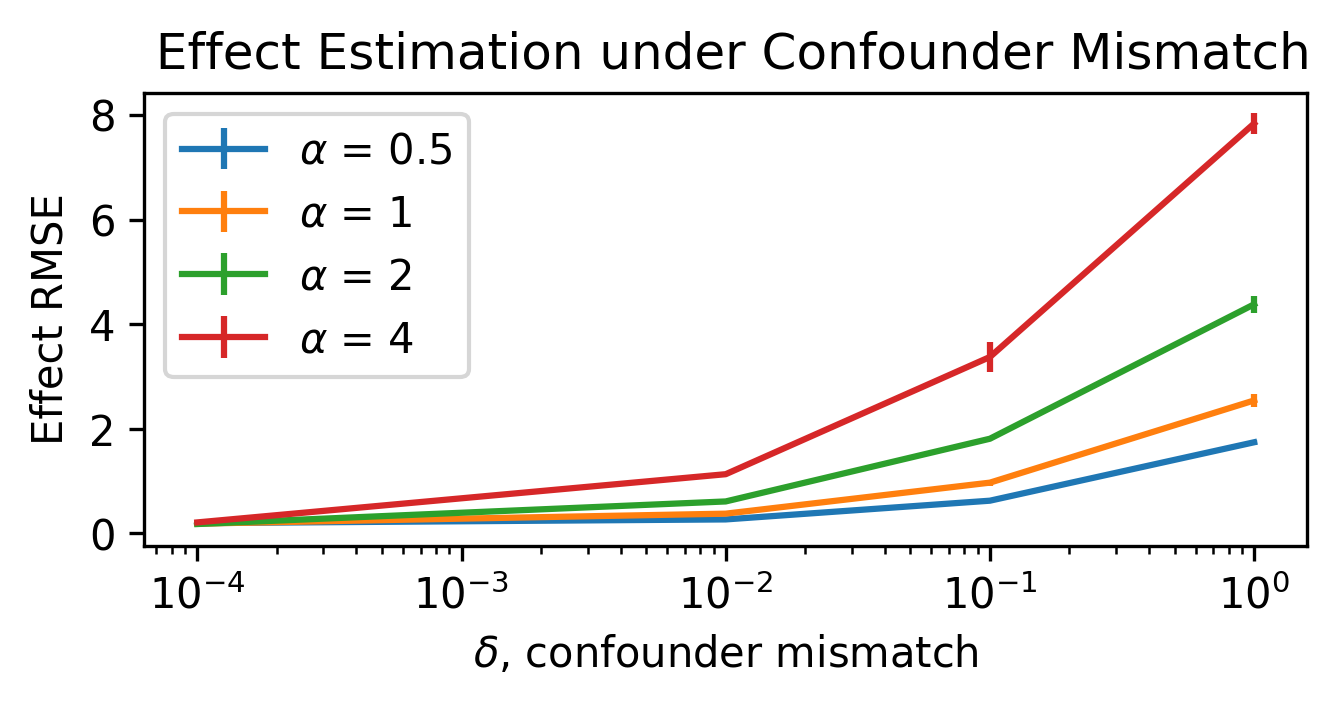}
        \caption{Causal Model $A$}
    \end{subfigure}
    ~ 
    \begin{subfigure}[b]{0.45\textwidth}
        \centering
        \includegraphics[width=\textwidth]{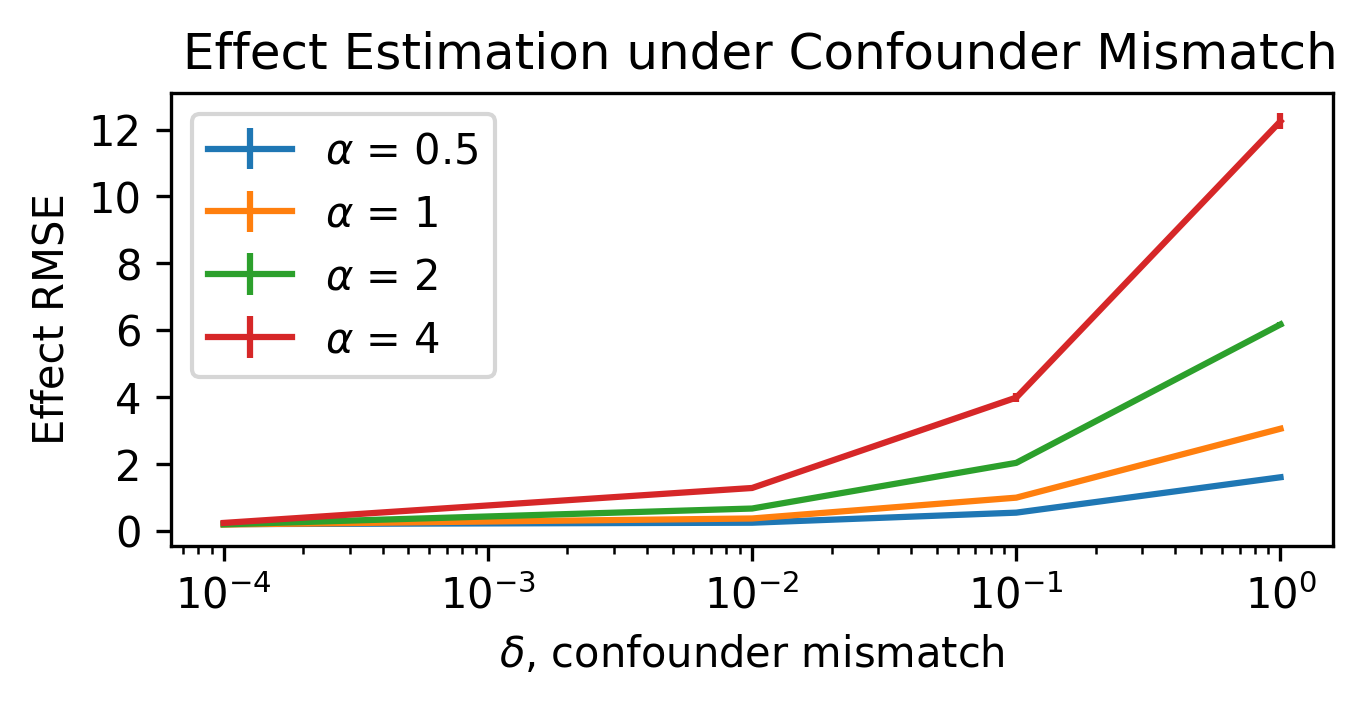}
        \caption{Causal Model $B$}
    \end{subfigure}
    \caption{\Gls{rmse} of estimated \condeff{} vs. degree of confounder mismatch $\delta$.
    Error due to conditioning on a mismatched value of the confounder increases with strength of confounding but is mitigated by smoothness of the outcome function. }
    \label{fig:est-err-confmismatch}
\end{figure*}
Finally, we consider how step size in Euler integration affects the quality of estimated effects.
Large step sizes may result in biased surrogate estimates; this bias is captured in the accumulation error in~\cref{sec:estimation-error}.
We focus on the non-linear case in causal model B where gradient errors can accumulate(see~\cref{appsec:linear-conf-func}).
We demonstrate this error in~\cref{fig:lr-error} where we plot mean and standard deviation of conditional effect \gls{rmse} against the strength of confounding, for different step sizes $\ell$.
We do not report results for larger step sizes ($\ell>2$) because Euler integration diverged for many surrogate estimates.

\subsection{Effects in Genetics (\gls{gwas})}

In this experiment, we explore the associations of genetic factors and Celiac disease.
We utilize data from the Wellcome Trust Celiac disease \gls{gwas} dataset~\citep{dubois2010multiple,wellcome2007genome} consisting of individuals with celiac disease, called cases $(n = 3796)$, and controls $(n = 8154)$.
We construct our dataset by filtering from the $\sim 550,000$ \glspl{snp}.
The only preprocessing in our experiments is linkage disequilibrium pruning of adjacent \glspl{snp} (at $0.5$ $R^2$) and PLINK~\cite{chang2015second} quality control.
After this, $337,642$ \glspl{snp} remain for $11,950$ people.
We imputed missing \glspl{snp} for each person by sampling from the marginal distribution of that \gls{snp}.
No further \gls{snp} or person was dropped due to missingness.
The objective of this experiment is to show that \gls{ours} corrects for confounding and recovers \glspl{snp} reported in the literature~\citep{dubois2010multiple,sollid2002coeliac,hunt2008novel,adamovic2008association}.
To this end, after preprocessing, we included in our data $50$ \glspl{snp} reported in ~\cite{dubois2010multiple,sollid2002coeliac,hunt2008novel,adamovic2008association} and $1000$ randomly sampled from the rest.

We use outcome models and functional confounders $h()$ traditionally employed in the \gls{gwas} literature.
We choose a linear $h(\vt) = A^\top\vt$, where $A$ is a matrix of the right singular vectors of a normalized Genotype matrix, that correspond to the top $10$ singular values~\cite{price2006principal}.
The outcome model is selected from logistic Lasso linear models with various regularization strengths, via cross validation within the training data ($60\%$ of the dataset).
We defer details about the experimental setup to~\cref{appsec:exps}.

We then use this outcome model in \gls{ours} to compute causal effects on the whole filtered dataset.
The effects are computed one \gls{snp} at a time.
  First, for each person $\vt$, create $\vt_i^1, \vt_i^0$ which correspond to the $i$th \gls{snp} set to $1$ and $0$ respectively, with all other \glspl{snp} same as $\vt$.
  Randomly sample a $h(\vts_2)$ from the marginal $p(h(\mbt))$ and, using the outcome model $P_\theta$, compute $\phi(\vt, i) = \log \nicefrac{P_\theta(y = 1\mid \vtt(\vt_i^1, h(\vts_2)))}{P_\theta(y = 1\mid \vtt(\vt_i^0, h(\vts_2)))}$.
  The average effect of \gls{snp} $i$ is obtained by averaging across all persons: $\sum_{\vt}\nicefrac{\phi(\vt,i)}{N}$.
Any \gls{snp} that beats a specified threshold of effect is deemed relevant to Celiac disease by \gls{ours}.
We use a $60-40\%$ train-test split, and outcome model selection is done via cross-validation within the training set.
We did $5$-fold cross-validation using just the training set.
We use Scikit-learn~\cite{pedregosa2011scikit} to fit the outcome models and for cross-validation.

\paragraph{Results}

The best outcome model was a Lasso model, trained with regularization constant $10$.
We select relevant \glspl{snp} by thresholding estimated effects at a magnitude $>0.1$.
From $1050$ \glspl{snp} ($1000$ not reported before) \gls{ours} returned $31$ \glspl{snp}, out of which $13$ were previously reported as being associated with Celiac disease~\citep{dubois2010multiple,sollid2002coeliac,hunt2008novel,adamovic2008association}.
In~\cref{appsec:real-exp} we plot the true positive and false negative rates of identifying previously reported~\glspl{snp}, as a function of the effect threshold.

\begin{wraptable}[14]{r}{0.43\textwidth}
   \centering
   \begin{tabular}{ c c c }
     \toprule
    \gls{snp} & \textsc{Effect.} & \textsc{Coef.} \\ 
    \midrule
 rs13151961 & $0.17$ & $0.32$ \\ 
 rs2237236 & $0.17$ & $0.00$ \\ \midrule
 rs1738074 & $-0.16$ & $-0.23$ \\ 
 rs11221332 & $-0.15$ & $-0.24$ \\ 
    \bottomrule
  \end{tabular}
    \caption{A few \glspl{snp} previously reported as relevant and recovered by \gls{ours}, with
    estimated effects and Lasso coefficients.
    \gls{ours} produces effect estimates that do not rely purely on the coefficients.
      }
   \label{tab:snps-est}
\end{wraptable}
In~\cref{tab:snps-est}, we list a few \glspl{snp} that were both deemed relevant by \gls{ours} and were reported in existing literature~\citep{dubois2010multiple,sollid2002coeliac,hunt2008novel,adamovic2008association}, their effects, and their Lasso coefficients.
The full list is in~\cref{apptab:snps-est} in~\cref{appsec:exps}.
If \gls{ours} cannot adjust for confounding, the Lasso coefficients would dictate the effects; $0$ coefficient means $0$ effect.
However, the two pairs of \glspl{snp} in \cref{tab:snps-est} show that the effects estimated by \gls{ours} do not rely solely on the Lasso coefficients.
For the first pair (rs13151961, rs2237236), the effect is the same but the coefficient of one is $0$, while the other is positive.
We note that rs2237236 was found to be associated with ulcerative colitis~\citep{hindorff2009potential,anderson2011meta}, which
is an inflammatory bowel disease that has been reported to share some common genetic basis with celiac disease~\citep{pascual2014inflammatory}.
For the second pair, (rs1738074, rs11221332), the magnitude of the effect is smaller for the former, but the coefficient is larger.
Thus, \gls{ours} adjusts for confounding factors that the outcome model ignored.
\section{Discussion}

When positivity is violated in \tci, not all effects are estimable without further assumptions.
In such cases, practitioners have to turn to parametric models to estimate causal effects.
However, parametric models can be misspecified when used without underlying causal mechanistic knowledge.
We develop a new general setting of observational causal effect estimation called \glsreset{efc}\gls{efc} where the confounder can be expressed as a function of the data, meaning positivity is violated.
Even when positivity is violated, the effects of many \funcints{} are estimable.
We develop a sufficient condition called \glsreset{func-pos}\gls{func-pos}\ to estimate effects of \funcints. 
Such effects could be of independent interest; like the effect of cumulative dosage of a drug instead of joint effects of multiple dosages at different times.

Second, we prove a necessary condition for nonparametric estimation of effects of the full intervention.
We propose the \gls{c-red} condition, under which, the effect of the full intervention on $\mbt$ is estimable without parametric restrictions.
We develop \glsreset{ours}\gls{ours} that computes surrogate interventions whose effects are estimable and match a conditional effect of interest.
Further, we give bounds on errors (\cref{thm:main-effect-bound}) induced due to imperfect estimation of the surrogate intervention.
Finally, we empirically demonstrate \gls{ours}'s ability to correct for confounding in both simulated and real data.

\paragraph{Future.}  A few directions of improvement remain which we elaborate next.
First, \gls{func-pos} may not hold for all functions $g(\mbt)$ that we want to intervene on. 
Instead, one could compute a ``projection'' $g_\Pi$ to the space of functions that satisfy \gls{func-pos} and inspect the effects defined by $g_\Pi$ instead.
A second direction of interest is to let $h(\mbt)$ only account for a part of the confounding, meaning ignorability is violated.
This bias could be mitigated under smoothness conditions of the outcome function and its interaction with the degree of violation of ignorability.

Finally, \gls{ours}'s search strategy is Euler integration, which is equivalent to gradient descent with a fixed step size.
Optimization techniques like momentum, rescaling the gradient using an adaptive matrix, and using second order hessian information, speed up gradient descent.
However, if there are many local or global minima for $(h(\vt) - h(\vts_2))^2$, such techniques will result in a different solution than Euler integration, which could mean that effect estimates are biased.
One extension of \gls{ours} would allow for search strategies that use such techniques.

\newpage
\section*{Broader Impact}
Our work mainly applies to causal inference where confounders are specified as functions of observed data, such as in problems in genetics and healthcare.
We choose to assess the impact of our work through its applications in these fields.
A positive impact of the work is that better estimates of causal effects helps guide treatment for people and aid in understanding biological pathways of diseases.
However, in healthcare, data collected in hospitals has biases.
If, for instance, a certain demographic of people have more complete data collected about them, then this demographic would have better quality effect estimates, potentially meaning that they receive better treatment.
This problem could be characterized by evaluating the positivity of treatment and completeness of confounders in electronic health record data split by demographics.

\section*{Acknowledgements}

The authors were partly supported by NIH/NHLBI Award R01HL148248, and by NSF Award 1922658 NRT-HDR: FUTURE Foundations, Translation, and Responsibility for Data Science.
The authors would like to thank Xintian Han, Raghav Singhal, Victor Veitch, Fredrik D. Johansson and the reviewers for thoughtful feedback.
The authors would also like to thank Mukund Sudarshan and Prof. Sriram Sankararaman for help with running the \gls{gwas} experiments.

\bibliography{refs}

\newpage
\appendix
\onecolumn
\section{Theoretical details}\label{appsec:theory}
\setcounter{theorem}{0}

\subsection{A note about the assumptions}\label{appsec:assumptions}

\paragraph{Note about the assumptions}
In~\cref{thm:thm-c-red}, assumption 1 consists of three parts that can all be validated on observed data: 1) that the gradient flow converges,
2) that the confounder value of the surrogate matches the confounder value
whose effect is of interest, and
3) that the surrogate intervention lies in the support of the pre-outcome variables.
Assumption 2 is required for expectations and their gradients to exist and be finite.
In~\cref{thm:main-effect-bound}, assumption 1 requires a consistent estimator of $\E[\mby \g \mbt]$, which can be provided with regression.
Assumption 3 lists regularity conditions which help control how the surrogate estimation error propagates to the effect error.

\subsection{Proof of \Cref{thm:thm-c-red}}\label{appsec:thm-c-red}
We restate the theorem for completeness:
\begin{theorem}
  \lemmaone{}
\end{theorem}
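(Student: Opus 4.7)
The plan is to show that the conditional effect $\phi(\vt, h(\vts_2))$ is conserved along the gradient flow trajectory $\vt(s)$, so that its value at the limit point $\vtt(\vts, h(\vts_2))$ coincides with its value at the initialization $\vt(0) = \vts$. The whole argument reduces to combining the chain rule with the orthogonality given by \gls{c-red}, and then translating the equality into one about observed conditional expectations.

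First, I would differentiate $\phi(\vt(s), h(\vts_2))$ along the flow. The chain rule gives
\begin{equation*}
    \fullder{}{s}\phi(\vt(s), h(\vts_2)) = \nabla_{\vt}\phi(\vt(s), h(\vts_2))^\top \fullder{\vt(s)}{s}
    = -2(h(\vt(s)) - h(\vts_2)) \nabla_{\vt}\phi(\vt(s), h(\vts_2))^\top \nabla_{\vt} h(\vt(s)).
\end{equation*}
Here the velocity has been expanded using $\nabla_{\vt}(h(\vt)-h(\vts_2))^2 = 2(h(\vt)-h(\vts_2))\nabla_{\vt} h(\vt)$. Using the boundedness and integrability of $\nabla_{\vt} f$ from assumption 2, I can swap $\nabla_{\vt}$ with the expectation over $\mbeta$ to get $\nabla_{\vt}\phi(\vt, h(\vts_2)) = \E_{\mbeta}[\nabla_{\vt} f(\vt, h(\vts_2), \mbeta)]$. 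Then \gls{c-red} applied inside the expectation gives $\nabla_{\vt}\phi(\vt, h(\vts_2))^\top \nabla_{\vt} h(\vt) = 0$, so the derivative above vanishes identically along the trajectory.

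Second, since $\fullder{}{s}\phi(\vt(s), h(\vts_2)) = 0$, the map $s \mapsto \phi(\vt(s), h(\vts_2))$ is constant. Continuity of $\phi$ in its first argument (which follows from continuity of $f$ and dominated convergence under the integrability in assumption 2) then transfers this equality to the limit:
\begin{equation*}
    \phi(\vtt(\vts, h(\vts_2)), h(\vts_2)) = \phi(\vts, h(\vts_2)).
\end{equation*}
Using assumption 1, namely $h(\vtt(\vts, h(\vts_2))) = h(\vts_2)$, I can substitute on the right-hand argument to obtain $\phi(\vtt, h(\vtt)) = \phi(\vts, h(\vts_2))$.

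Finally, I would close the loop by identifying $\phi(\vtt, h(\vtt))$ with the observational conditional expectation. Because $\vtt(\vts, h(\vts_2)) \in \supp(\mbt)$ by assumption 1, the conditional $p(\mby \mid \mbt = \vtt)$ is well defined, and by ignorability the outcome under the intervention $\doot{\vtt}$ is distributed as $f(\vtt, h(\vtt), \mbeta)$ with $\mbeta \indep \mbt$, hence $\E[\mby \mid \mbt = \vtt] = \E_{\mbeta}[f(\vtt, h(\vtt), \mbeta)] = \phi(\vtt, h(\vtt))$. Chaining this with the previous equality yields the claimed identification. The main obstacle, and the step I would check most carefully, is the justification for differentiating under the expectation when computing $\nabla_{\vt}\phi$; this is precisely what assumption 2's boundedness and integrability are designed to provide, and it is also what lets the theorem be stated under the weaker orthogonality between $\nabla_{\vt}\phi$ and $\nabla_{\vt} h$ noted after the statement.
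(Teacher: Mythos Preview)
Your proposal is correct and follows essentially the same route as the paper: both swap $\nabla_{\vt}$ and $\E_{\mbeta}$ via dominated convergence to lift \gls{c-red} to $\nabla_{\vt}\phi^\top\nabla_{\vt}h=0$, then show $\phi(\vt(s),h(\vts_2))$ is constant along the gradient flow, substitute $h(\vtt)=h(\vts_2)$, and identify $\phi(\vtt,h(\vtt))$ with $\E[\mby\mid\mbt=\vtt]$ using $\vtt\in\supp(\mbt)$. The only cosmetic difference is that the paper packages the constancy step as a line integral via the Gradient Theorem, whereas you write it directly as $\fullder{}{s}\phi(\vt(s),h(\vts_2))=0$; these are the same computation.
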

\begin{proof}
  Recall definition of conditional effect $\phi(\vt, h(\vt_2)) = \E_\mbeta f(\vt, h(\vt_2), \mbeta)$.
  Recall $\nabla_\vt$ is the gradient \wrt the first argument of $f$, that is $\vt$.
  First, by assumption 2, $\E$ and $\nabla$ commute, under the dominated convergence theorem. Then, by \gls{c-red}
  \[\nabla_{\vt} \phi(\vt, h(\vts))^T \nabla_{\vt} h(\vt) = \nabla_{\vt} \E_{\mbeta} f(\vt,h(\vts),\mbeta)^T \nabla_{\vt} h(\vt) = \E_{\mbeta}  [\nabla_{\vt} f(\vt,h(\vts),\mbeta)^T \nabla_{\vt} h(\vt)] = 0 .\]
  Now consider the gradient flow equation $\nicefrac{d\vt(s)}{ds} = -\nabla_{\vt} (h(\vt) - h(\vts_2))^2$.
  We refer to the gradient evaluated at $\vt$ as $\Delta\vt = - \nabla_{\vt} (h(\vt) - h(\vts_2))^2 = - 2 (h(\vt) - h(\vts_2))\nabla_{\vt} h(\vt)$.
  We will express $\phi(\vtt(\vts, h(\vts_2)), h(\vts_2))$ as defined by the starting point $\phi(\vts, h(\vts_2))$ and the gradient flow equation.

  Let the solution path to the gradient flow equation be $C$ with $\vts, \vtt(\vts, h(\vts_2))$ being the starting and ending points respectively.
  By the Gradient Theorem~\citep{spivak2018calculus}, we have that $\phi(\vts, h(\vts_2))$ and $\phi(\vtt(\vts, h(\vts_2)), h(\vts_2))$ are related via the line integral over $C$:
  \[\int_C \nabla_{\vt} \phi(\vt, h(\vts_2)) \cdot d \vt = \phi(\vtt(\vts, h(\vts_2)), h(\vts_2)) - \phi(\vt, h(\vts_2)) \]
  Let $\vt(s)$ be a parametrization of solution path $C$ by the scalar time $s\in[0,\infty)$.
  Now, to obtain the value of $\phi(\vt, h(\vts_2))$, we will compute the line integral over the vector field defined by $\nabla_{\vt} \phi(\vt, h(\vts_2))$, which exists by assumption 2 in~\cref{thm:thm-c-red}, evaluated along the path $C$ defined by $\Delta \vt(s)$:
  \begin{align}\label{eq:f-function-change}
    \begin{split}
      \phi(\vtt(\vts, h(\vts_2)), h(\vts_2)) &= \phi(\vts, h(\vts_2)) + \int_C \nabla_{\vt} \phi(\vt, h(\vts_2)) \cdot d \vt
      \\
      &= \phi(\vts, h(\vts_2)) + \int_0^\infty \nabla_{\vt} \phi(\vt(s), h(\vts_2))^T \fullder{\vt(s)}{s}\ ds 
      \\
      & =  \phi(\vts, h(\vts_2)) + \int_0^\infty \nabla_{\vt} \phi(\vt(s), h(\vts_2))^T \Delta \vt(s)\ ds 
      \\
      & =  \phi(\vts, h(\vts_2))
      \\ & \quad \quad + \int_0^\infty - 2((h(\vt(s)) - h(\vts_2)))\, \nabla_{\vt} \phi(\vt(s), h(\vts_2))^T \nabla_{\vt} h(\vt(s))\ ds
      \\
      & = \phi(\vts, h(\vts_2)) + 0 \quad \quad \text{\{by \gls{c-red}\}}
    \end{split}
  \end{align}
  Finally, by assumption 1 in~\cref{thm:thm-c-red},  $h(\vtt(\vts,h(\vts_2))) = h(\vts_2)$, and so 
  \begin{align}\label{eq:f-connection}
    \phi(\vts, h(\vts_2)) = \phi(\vtt(\vts, h(\vts_2)), h(\vts_2))
    = \phi(\vtt(\vts, h(\vts_2)), h(\vtt(\vts, h(\vts_2))))
  \end{align}
  For clarity, the same equation, but using $\vtt$ and suppressing dependence on $\vts, h(\vts_2)$):
  
  \begin{align}\label{eq:f-connection}
    \phi(\vts, h(\vts_2)) = \phi(\vtt, h(\vts_2))
    = \phi(\vtt, h(\vtt))
  \end{align}

  Under the causal model for \gls{efc}, the outcome $\mby = f(\mbt, h(\mbt), \mbeta)$.
  Then, $\forall \vt \in \supp(p(\mbt))$,
  \begin{align}
    \label{eq:equality-of-conditional}
  \quad \E[\mby\g \mbt = \vt] = \E_{\mbeta}[f(\vt, h(\vt), \mbeta)] = \phi(\vt, h(\vt)).
  \end{align}
  
  Using that $ \vtt(\vts, \vts_2) \in \supp(p(\mbt))$ and~\cref{eq:equality-of-conditional,eq:f-connection}, the conditional effect is identified
  \begin{align}
    \begin{split}
      \phi(\vts, h(\vts_2)) & = \phi(\vtt(\vts, h(\vts_2)), h(\vtt(\vts, h(\vts_2))))
      \\
      & = \E[\mby\g \mbt=\vtt(\vts, h(\vts_2))]
    \end{split}
  \end{align}
  Thus, the conditional effect, and consequently the \avgeff{}, are identified as $\E[\mby\g \vtt(\vts, h(\vts_2))]$ and ${\tau}(\vts) = \E_{h(\mbt)}\E[\mby\g \vtt(\vts, h(\mbt))]$ respectively.
\end{proof}
\paragraph{Note about convergence of gradient flow}
Any ODE's solution, if it exists and converges, converges to an $\omega$-limit set~\cite{teschl2012ordinary}.
An $\omega$-limit set is nonempty when the solution path lies entirely in a closed and bounded set and can consist of limit cycles, equilibrium points, or neither~\cite{hirsch1974differential,teschl2012ordinary}.
A gradient flow equation $\nicefrac{d \vt (s)}{ ds} = -\nabla h(\vt)$ (also called a gradient system) has the special property that its $\omega$-limit set only consists of critical points of $h(\vt)$; critical points of $h(\vt)$ are also equilibrium points of the gradient flow equation~\cite{hirsch1974differential}.
Further, if $\nabla h(\vt)$ exists and is bounded and $h(\vt)$ has bounded sublevel sets ($\{\vt: h(\vt) \leq c\}$), then the solution to the gradient flow equation will entirely lie within a bounded set.
This is because along the solution path, $h(\vt(s))$ always decreases meaning that the solution will remain in any sublevel set it started in.
Thus, if $h(\vt)$ has bounded sublevel sets, the solution of the gradient flow equation will converge only to critical points of $h(\vt)$.

\subsection{Estimation error in \gls{ours}}\label{appsec:lode-error}

\begin{theorem}
\theoremmain
\end{theorem}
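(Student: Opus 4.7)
The plan is to peel off the $\hat{f}$-estimation error from the surrogate-estimation error via the triangle inequality, then bound the remaining term by the minimum of two complementary arguments. Starting from $\xi = |\hat{f}(\vthat) - \phi(\vts, h(\vts_2))|$, assumption 1 gives $|\hat{f}(\vthat) - \E[\mby \g \mbt = \vthat]| \leq c(N)$, and since the causal model implies $\E[\mby \g \mbt = \vt] = \phi(\vt, h(\vt))$, it suffices to bound the residual $R := |\phi(\vthat, h(\vthat)) - \phi(\vts, h(\vts_2))|$ by the quantity inside the $\min$.

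For the first branch of the $\min$, I would invoke~\cref{thm:thm-c-red} to rewrite $\phi(\vts, h(\vts_2)) = \phi(\vtt, h(\vtt)) = \E[\mby \g \mbt = \vtt]$ and then apply the $L_e$-Lipschitz continuity of $\E[\mby \g \mbt = \cdot]$ to get $R \leq L_e \|\vthat - \vtt\|_2$. This branch is tight when the Euler iterate happens to land near $\vtt$ even if individual steps were inaccurate.

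For the second branch, I would introduce the intermediate point $(\vthat, h(\vts_2))$ and split $R \leq |\phi(\vthat, h(\vthat)) - \phi(\vthat, h(\vts_2))| + |\phi(\vthat, h(\vts_2)) - \phi(\vts, h(\vts_2))|$. The first piece (confounder mismatch) is immediately bounded by $L_{z, \vthat}\|h(\vthat) - h(\vts_2)\|_2$ using assumption 3. For the second piece (level-set drift), I would track $\phi(\vt_i, h(\vts_2))$ across the $K$ Euler updates $\vt_{i+1} = \vt_i + \ell \Delta\vt_i$ with $\Delta\vt_i = -2(h(\vt_i) - h(\vts_2))\nabla_\vt h(\vt_i)$. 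A second-order Taylor expansion of $\phi(\cdot, h(\vts_2))$ at $\vt_i$ produces three pieces: (i) a first-order term $\ell \nabla_\vt \phi(\vt_i, h(\vts_2))^\top \Delta\vt_i$ that \emph{vanishes} by \gls{c-red} because $\Delta\vt_i$ is parallel to $\nabla_\vt h(\vt_i)$; (ii) a second-order term bounded by $\tfrac{1}{2}\sigma_{\mathtt{H}\phi}\|\ell \Delta\vt_i\|_2^2 \leq 2\ell^2 \sigma_{\mathtt{H}\phi} M L_h^2$ via $\|\Delta\vt_i\|_2 \leq 2|h(\vt_i) - h(\vts_2)|L_h$ together with the definition of $M$; and (iii) a cubic Taylor remainder of order $\mathcal{O}(\ell^3 M^{3/2} L_h^3)$ that contributes the $\mathcal{O}(\ell)$ factor. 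Telescoping over the $K$ steps then gives the advertised $2K\ell^2(\mathcal{O}(\ell) + M\sigma_{\mathtt{H}\phi} L_h^2)$.

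The key subtlety is the vanishing first-order term: without \gls{c-red}, each Euler step would contribute $\mathcal{O}(\ell)$ drift in $\phi(\cdot, h(\vts_2))$ and the accumulated error would scale as $\mathcal{O}(K\ell)$ rather than $\mathcal{O}(K\ell^2)$. Everything else is routine triangle inequality, Lipschitz estimates, and Taylor remainder control using the uniform bounds $\|\nabla h\|_2 \leq L_h$ and $\|\nabla^2_\vt \phi\|_2 \leq \sigma_{\mathtt{H}\phi}$ combined with the per-step mismatch bound $M$.
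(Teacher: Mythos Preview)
Your proposal is correct and follows essentially the same route as the paper's proof: the same triangle-inequality peel-off of $c(N)$, the same two branches for the residual (one via $L_e\|\vtt - \vthat\|_2$ after invoking \cref{thm:thm-c-red}, the other via the intermediate point $(\vthat, h(\vts_2))$ and a telescoped second-order Taylor expansion over the $K$ Euler steps), and the same crucial use of \gls{c-red} to kill the first-order term so that each step contributes only $\mathcal{O}(\ell^2)$ drift. The constants and the structure match the paper's argument line by line.
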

\begin{proof} (of \Cref{thm:main-effect-bound})
  Recall the definition of conditional effect : $\phi(\vt, h(\vt_2)) = \E_{\mbeta}f(\vt, h(\vt_2), \mbeta)$.

  \gls{ours}'s estimate of the conditional effect is $\hat{f}(\vthat(\vts, h(\vts_2)))$.
  We will suppress notation for dependence on $\vts, h(\vts_2)$, and use $\vtt$ and $\vthat$ to refer to the true surrogate intervention and the estimated surrogate interventions respectively.
  Note $\hat{f}$ is the estimate of the conditional expectation $\E[\mby\g \mbt=\vt]$, learned from $N$ samples.
  We first bound the error by splitting into two parts and bounding each separately:
  \begin{align*}
      |\xi(\vts,h(\vts_2))| & =  | \hat{f}(\vthat)  -  \phi(\vts, h(\vts_2)) |
      \\
                            &\leq| \hat{f}(\vthat)  -  \phi(\vthat, h(\vthat)) | + | \phi(\vthat, h(\vthat))  - \phi(\vts, h(\vts_2))| 
      \\
                          & \leq c(N) + | \phi(\vthat, h(\vthat))  - \phi(\vts, h(\vts_2))| 
      \\
                          & \leq |\phi(\vthat, h(\vthat)) - \phi(\vthat, h(\vts_2)) | + |\phi(\vthat, h(\vts_2))  - \phi(\vts, h(\vts_2)) | + c(N)
  \end{align*}
  The first term is bounded via the Lipschitz-ness of $\phi$ as a function of $h(\vt)$ with fixed first argument $\vt = \vthat$.
  \[|\phi(\vthat, h(\vthat)) - \phi(\vthat, h(\vts_2)) | \leq L_{z,\vthat} |h(\vthat) - h(\vts_2)|\]

  We now bound the remaining term.
  Recall that \Gls{ours}'s computation of the surrogate intervention involved $K$ gradient steps, each of size $\ell$.
  We work with a constant step-size but the analysis can be generalized to a non-uniform step size.
  Indexing steps with $i$, let $d_i = h(\vt_i) - h(\vts_2)$ be the confounder mismatch error at the $i$th iterate.
  Then note that $\vthat = \vts - \ell \sum_{i=0}^{K-1} 2 d_i \nabla_\vt h(\vt_i) $.
  We can use this to bound the error $\phi(\vthat, h(\vts_2)) - \phi(\vts, h(\vts_2)) $.
  With $\vt_K = \vthat$ and $\vt_0 = \vts$, we proceed by expressing the error as a telescoping sum and using the Taylor expansion for $\phi(\vt, h(\vts_2))$ in terms of the the first argument $\vt$.
  \allowdisplaybreaks
  \begin{align}\label{eq:lode-est-err}
      &\phi(\vthat, h(\vts_2))  - \phi(\vts, h(\vts_2))  = \sum_{i=0}^{K-1} \phi(\vt_{i+1}, h(\vts_2))  - \phi(\vt_{i}, h(\vts_2))
      \\
                      & = \sum_{i=0}^{K-1} \nabla_\vt \phi(\vt_i, h(\vts_2))^\top (\vt_{i+1} - \vt_i)  
                      \\ & \quad\quad \quad \quad + 
                      \frac{1}{2}(\vt_{i+1} - \vt_i)^\top \nabla_\vt^2  \phi(\vt_{i}, h(\vts_2)) (\vt_{i+1} - \vt_i) + \mathcal{O}(\|\vt_{i+1} - \vt_i\|_2^3)
                      \\
                      & = \sum_{i=0}^{K-1} 2 \ell d_i \nabla_\vt \phi(\vt_i, h(\vts_2))^\top \nabla_\vt h(\vt_i)
                        + 2(\ell d_i)^2 \nabla_\vt h(\vt_i)^\top \nabla_\vt^2  \phi(\vt_{i}, h(\vts_2)) \nabla_\vt h(\vt_i)+ \mathcal{O}(\ell^3)
                      \\
                      & = \sum_{i=0}^{K-1} 0  + 2(\ell d_i)^2 \nabla_\vt h(\vt_i)^\top \nabla_\vt^2  \phi(\vt_{i}, h(\vts_2)) \nabla_\vt h(\vt_i) + \mathcal{O}(\ell^3)
                      \\
                      & = \mathcal{O}(K\ell^3) + \sum_{i=0}^{K-1} 2(\ell d_i)^2 \nabla_\vt h(\vt_i)^\top \nabla_\vt^2  \phi(\vt_{i}, h(\vts_2)) \nabla_\vt h(\vt_i) 
                      \\
                      & \leq \mathcal{O}(K\ell^3) + \sum_{i=0}^{K-1} 2(\ell (h(\vt_i) - h(\vts_2)))^2 \left|\nabla_\vt h(\vt_i)^\top \nabla_\vt^2  \phi(\vt_{i}, h(\vts_2)) \nabla_\vt h(\vt_i) \right|
                      \\
                      & \leq \mathcal{O}(K\ell^3) + \sum_{i=0}^{K-1}2\ell^2M\left| \nabla_\vt h(\vt_i)^\top \nabla_\vt^2  \phi(\vt_{i}, h(\vts_2)) \nabla_\vt h(\vt_i) \right|
                      \\
                      & \leq \mathcal{O}(K\ell^3) + \sum_{i=0}^{K-1} 2\ell^2M \sigma_{\mathtt{H} \phi } \|\nabla_\vt h(\vt_i)\|_2^2
                      \\
                      & \leq \mathcal{O}(K\ell^3) + \sum_{i=0}^{K-1} 2\ell^2M \sigma_{\mathtt{H} \phi } L^2_{h} 
                      \\
                      & = 2K \ell^2 \left(\mathcal{O}(\ell) + M \sigma_{\mathtt{H} \phi } L^2_{h}  \right),
  \end{align}
  where the inequalities follow by the maximum value of $(h(\vt_i) - h(\vts_2))^2$, bounded eigenvalues of the Hessian of $\phi$ and the Lipschitz-ness of $h(\vt)$.
  
  Another way we bound the error is via the Lipschitz constant of the conditional expectation as a function of $\vt$.
  Recall this is $L_{e}$. An alternate bound on the error is as follows:
\begin{align*}
|\phi(\vthat, h(\vthat))  - \phi(\vts, h(\vts_2))| = |\phi(\vthat, h(\vthat))  - \phi(\vtt, h( \vtt))| \leq & L_e\|\vtt - \vthat\|_2
\end{align*}
The bound follows:
\begin{align*}
  |\xi(\vt,h(\vts_2))|  & \leq  c(N) + \min\left(L_e\|\vtt - \vthat\|_2, 
                        \quad 2K \ell^2 \left(\mathcal{O}(\ell) + M \sigma_{\mathtt{H} \phi } L^2_{h}  \right) + 
                      L_{z,\vthat} \|h(\vthat) - h(\vts_2)\|_2\right)
\end{align*}
\end{proof}
\subsubsection{A note on linear confounder functions and \gls{ours}}\label{appsec:linear-conf-func}
In the proof above, the error in Euler integration accumulates due to terms like this one: $\nabla_\vt^\top h(\vt) \nabla^2_\vt f(\vt, h(\vts),\eta) \nabla_\vt h(\vt)$.
For a linear confounder function that satisfies $\nabla_\vt h(\vt) = \beta$, such terms can be expressed as $\beta^\top \nabla_\vt (\nabla_\vt f(\vt, h(\vts),\eta)^\top \beta) = \beta^\top \nabla_\vt (0)= 0 $ under \gls{c-red}.
Thus, such error does not accumulate even with large step sizes.

Further, note that the gradient flow equation in \gls{ours} for the causal model $A$ in~\cref{sec:exps} is a linear ODE whose solution has a closed form expression and one can estimate the surrogate without numerical integration~\cite{teschl2012ordinary}.

\subsection{Proof of sufficiency of Effect Connectivity}\label{appsubsec:sufficiency}

\begin{theorem}
  \sufficiencylemma{}{}
\end{theorem}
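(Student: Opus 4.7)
The plan is to read Effect Connectivity directly as saying that the set of points in $\supp(\mbt)$ whose conditional effect matches $\phi(\vts, h(\vts_2))$ is non-empty for each feasible confounder value, and then identify such a point with a surrogate intervention.

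First I would unpack the definition. A surrogate intervention $\vtt(\vts, h(\vts_2))$, as used in \cref{thm:thm-c-red}, is characterized by the two conditions $h(\vtt) = h(\vts_2)$ and $\phi(\vtt, h(\vtt)) = \phi(\vts, h(\vts_2))$. Combining these, a surrogate is exactly a point $\vt^\circ$ such that $h(\vt^\circ) = h(\vts_2)$ and $\phi(\vt^\circ, h(\vts_2)) = \phi(\vts, h(\vts_2))$. So it suffices to exhibit such a $\vt^\circ$ inside $\supp(\mbt)$.

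Next I would invoke Effect Connectivity, \cref{eq:weird-func-pos-condition}. Since $h(\vts_2) \in \supp(h(\mbt))$ means $p(h(\mbt) = h(\vts_2)) > 0$, Effect Connectivity gives
\begin{equation*}
  p\bigl(\phi(\mbt, h(\mbt)) = \phi(\vts, h(\vts_2)) \,\big|\, h(\mbt) = h(\vts_2)\bigr) > 0.
\end{equation*}
Multiplying by $p(h(\mbt)=h(\vts_2)) > 0$, the joint event
$\{h(\mbt) = h(\vts_2)\} \cap \{\phi(\mbt, h(\mbt)) = \phi(\vts, h(\vts_2))\}$
has positive probability under $p(\mbt)$, hence is non-empty and lies in $\supp(\mbt)$. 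Pick any $\vt^\circ$ from this set: it simultaneously satisfies $h(\vt^\circ) = h(\vts_2)$ and, via $h(\vt^\circ)=h(\vts_2)$, also $\phi(\vt^\circ, h(\vts_2)) = \phi(\vts, h(\vts_2))$. So $\vt^\circ$ meets the surrogate definition and is in $\supp(\mbt)$.

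The only subtlety is a notational one: the statement speaks of ``any surrogate intervention,'' which I would interpret (consistently with how surrogates are used in \cref{thm:thm-c-red}) as the existence of a surrogate lying in $\supp(\mbt)$ rather than requiring every abstract solution of the defining equations to sit in the support. Under this reading there is no real obstacle; the argument is essentially a one-line translation from the probabilistic statement of Effect Connectivity to the geometric statement about surrogate level sets described immediately before the theorem (the $A_c$ level-set reformulation). If instead the intended reading is ``every surrogate,'' the same argument shows that the surrogate set intersects $\supp(\mbt)$, which is what \gls{ours} needs in order to identify $\E[\mby \g \mbt = \vtt]$ as the conditional effect.
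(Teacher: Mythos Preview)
Your proposal is correct and follows essentially the same approach as the paper's proof: invoke Effect Connectivity to obtain positive conditional probability of $\{\phi(\mbt,h(\mbt))=\phi(\vts,h(\vts_2))\}$ given $h(\mbt)=h(\vts_2)$, deduce that the corresponding set is non-empty in $\supp(\mbt)$, and check that any point of it meets the surrogate requirements. Your discussion of the ``any'' versus ``exists'' reading is a fair clarification; the paper's own proof likewise only establishes existence of a surrogate in $\supp(\mbt)$.
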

\begin{proof}
  Recall  $\phi(\vt, h(\vt)) = \E_\mbeta f(\vt,h(\vt), \mbeta)$.
  We have $\forall \vts \in\supp(p(\mbt))$:
  \[ p(h(\mbt) = h(\vts_2))>0\implies p(\phi(\mbt, h(\mbt)) = \phi(\vts, h(\vts_2))\g h(\mbt) = h(\vts_2)) >0.\]
  This implies $\exists \vtt\in \supp(\mbt), \phi(\vtt, h(\vts_2)) = \phi(\vts, h(\vts_2)), \quad s.t. \quad h(\vtt) = h(\vts_2).$
  
  Then, $\phi(\vts, h(\vts_2)) = \phi(\vtt, h(\vts_2)) = \phi(\vtt, h(\vtt)) = \E[\mby\g \mbt=\vtt].$
\end{proof}

\subsection{Necessity of Effect Connectivity for Nonparametric effect estimation in \gls{efc}}\label{appsubsec:necessity-proof}

\begin{theorem}
  \necessitylemma{}
\end{theorem}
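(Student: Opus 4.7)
The approach is to prove the contrapositive by constructing a counterexample. Specifically, I will show that whenever Effect Connectivity fails at some pair $(\vts, h(\vts_2))$, there exist two outcome functions $f_1$ and $f_2$ that induce identical observational distributions $p(\mby, \mbt)$ yet produce different conditional effects $\phi_1(\vts, h(\vts_2)) \neq \phi_2(\vts, h(\vts_2))$. Any estimator that depends only on the observational distribution (the meaning of nonparametric identifiability) cannot distinguish between the two models, so the effect cannot be identified.

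Let $f_1$ be the true outcome function and set $c = \phi_1(\vts, h(\vts_2))$. Failure of Effect Connectivity at $(\vts, h(\vts_2))$ means no observed $\vt \in \supp(\mbt)$ satisfies both $h(\vt) = h(\vts_2)$ and $\phi_1(\vt, h(\vt)) = c$. A preliminary observation is that the target point $(\vts, h(\vts_2))$ must then lie off the observation manifold $D := \{(\vt, h(\vt)) : \vt \in \supp(\mbt)\}$: if $\vts \in \supp(\mbt)$ and $h(\vts) = h(\vts_2)$, then $\vt = \vts$ itself would witness $h(\vt) = h(\vts_2)$ and $\phi_1(\vt, h(\vt)) = c$, contradicting the failure assumption.

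Exploiting this off-diagonal freedom, I define $f_2(\vt, z, \mbeta) = f_1(\vt, z, \mbeta) + \alpha \cdot \mathbf{1}[(\vt, z) = (\vts, h(\vts_2))]$ for some $\alpha \neq 0$, where $z$ denotes a confounder value. For every $\vt \in \supp(\mbt)$, the pair $(\vt, h(\vt))$ lies in $D$, so $(\vt, h(\vt)) \neq (\vts, h(\vts_2))$ and the indicator vanishes, giving $\E_\mbeta f_2(\vt, h(\vt), \mbeta) = \E_\mbeta f_1(\vt, h(\vt), \mbeta)$. Since the treatment marginal $p(\mbt)$ is unchanged, the observational distributions $p(\mby, \mbt)$ under $f_1$ and $f_2$ coincide. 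However $\phi_2(\vts, h(\vts_2)) = c + \alpha \neq c = \phi_1(\vts, h(\vts_2))$, so the conditional (and hence average) effects differ, proving that Effect Connectivity is necessary.

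The main subtlety is establishing that the failure point $(\vts, h(\vts_2))$ really lies off $D$, which is precisely what the preliminary observation above ensures. A secondary concern is regularity of $f_2$: if one requires $f_2$ to be smooth or continuous, the pointwise indicator can be replaced by a smooth bump function supported in an open neighborhood of $(\vts, h(\vts_2))$ disjoint from $D$, which exists whenever $D$ is closed (a mild condition implied by continuity of $h$ and closedness of $\supp(\mbt)$). The rest of the argument is unchanged.
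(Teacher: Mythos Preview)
Your argument is correct and follows essentially the same construction as the paper: perturb the outcome function off the observation manifold $D=\{(\vt,h(\vt)):\vt\in\supp(\mbt)\}$ so that observed data are unchanged while the targeted conditional effect shifts. The only cosmetic difference is that the paper perturbs on the entire non-measure-zero violation set $B$ (adding a constant on $\mathbf{1}[(\vt,h(\vt_2))\in B]$) rather than at a single point, which makes the non-identifiability of the \emph{average} effect immediate; your bump-function remark effectively recovers this extension.
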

\begin{proof} (Proof of~\Cref{thm:thm-necessity})
  Let the outcome be $\mby= f(\mbt, h(\mbt))$.
  Recall the joint distribution $p(\mbt, \mby)$ and let $h(\mbt)$ be the confounder.
  Let Effect Connectivity be violated, i.e.\ there exists a non-measure-zero subset $B \in \supp(\mbt)\times \supp(h(\mbt))$  such that \footnote{Non-zero w.r.t.\ the product measure over $\supp(\mbt)\times \supp(h(\mbt))$ due to $p$.}:
  \[\forall \,\, \vt, h(\vt_2) \in B, \quad \quad p(f(\mbt, h(\mbt)) = f(\vt, h(\vt_2))\g h(\mbt) = h(\vt_2)) = 0.\]
  Now, we construct a new outcome $\mby_2 = f_2(\mbt, h(\mbt))$ and show the conditional effects for this new outcome are different from the one defined by $f$ on $\forall (\vt, h(\vt_2))\in B$. Let
  \[f_2(\vt, h(\vt_2)) = f(\vt, h(\vt_2)) + 10*1((\vt, h(\vt_2))\in B)|.\]
  We have $f_2(\vt, h(\vt)) = f(\vt, h(\vt)) \, \forall \vt \in \supp(\mbt)$ , as the additional term in $f_2$ is only present for $(\vt, h(\vt_2))\in B$; this follows from the fact that $\forall \vt \in \supp(\mbt)$, $(\vt, h(\vt)) \not\in B$ as
  \[p[f(\mbt, h(\mbt)) = f(\vt, h(\vt)) \g h(\mbt) = h(\vt)] = p[f(\mbt, h(\mbt)) = f(\vt, h(\vt))]  >0.\]
  Thus, $p(\mby, \mbt) =^d p(\mby_2, \mbt)$ are equal in distribution since $B \cap \supp(\mbt, h(\mbt)) = \emptyset$.
  This means that the conditional effects are different for the outcomes $\mby, \mby_2$ for all $(\vt, h(\vt_2)) \in B$: 
  \[\E[\mby \g do(\mbt = \vt), h(\mbt) = h(\vt_2)] \not= \E[\mby_2 \g do(\mbt = \vt), h(\mbt) = h(\vt_2)] \]
  Therefore, for causal models that violates Effect Connectivity, there exist observationally equivalent causal models with different causal effects.
  Thus, nonparametric effect estimation is impossible. 
  Thus, Effect Connectivity is required for \gls{efc}.
\end{proof}

\subsection{Algorithmic details}
We give in ~\cref{alg:full-eff-estimation} pseudocode for \gls{ours}.
  \begin{algorithm}[ht]
    \SetAlgoLined
    \SetKwProg{Fn}{Function}{ is}{end}
    \KwIn{Functional confounder $h(\mbt)$; tolerance $\epsilon$}
    \KwOut{Conditional effects of $\vts, h(\vts_2)$}
    Regress $\mby$ on $\mbt$ and compute 
    $\hat{f}() := \arg\min_{u\in \mathcal{F}} \E_{\mby,\mbt}(\mby - u(\mbt))^2 .$ \\
    To estimate effects of $\vts, h(\vts_2)$, compute the surrogate intervention $\vtt(\vts, h(\vts_2))$ by Euler integrating the gradient flow equation, initialized at $\vt = \vts$, until $(h(\vt_s) - h(\vts_2))^2 < \epsilon$.
    \[\fullder{\vt(s)}{s} = \nabla_\vt (h(\vt_s) - h(\vts_2))^2,\]
    \\
    Return $\hat{f}(\vtt(\vts, h(\vts_2)))$;
    \caption{\gls{ours} for $do(\mbt=\vts)$}
    \label{alg:full-eff-estimation}
  \end{algorithm}
  \paragraph{Extensions of \gls{ours}}
  Consider that we have access to $m(h(\mbt))$ for some bijective differentiable function $m(\cdot)$, instead of $h(\mbt)$.
  The orthogonality in \gls{c-red} holds $\nabla_{\vt} f(\vt,h(\vt_2), \eta)^T \nabla_{\vt} m(h(\vt)) = m'(h(\vt))\nabla_{\vt} f(\vt,h(\vt_2), \eta)^T \nabla_{\vt} h(\vt) = 0$.
  Then, using $m(h(\vt))$ to compute the surrogate $\vtt(\vts, h(\vts_2))$, \gls{ours} would estimate valid effects. 
  Similarly, \gls{ours} can estimate the effect on any differentiable transformation of the outcome $m(y)$, because
  $\nabla_{\vt} m(y_\vt)^T\nabla_{\vt} h(\vt) = m'(y_\vt)\nabla_{\vt} f(\vt,h(\vt_2), \eta)^T \nabla_{\vt} h(\vt) = 0$ holds.
\section{Experimental Details}\label{appsec:exps}
\subsection{Functional confounders in \gls{gwas}}
Here, we show how $h(\mbt)= At$ and $A$ reflect the traditional \gls{pca} based adjustment in \gls{gwas}.
Recall population structure acts as a confounder in \gls{gwas}.
\citet{price2006principal} demonstrated that using the principal components of the normalized genetic relationships matrix adjusts for confounding due to population structure in \gls{gwas}.
Let the genotype matrix be $G$ with people as rows and \snps{} as columns, such that each element is one of $0,\nicefrac{1}{2}, 1$, where $\nicefrac{1}{2}$ and $1$ refer to one and two copies of the allele respectively at the position of the \gls{snp}.
With $p_s$ as the allele frequency at \gls{snp} $s$~\citep{thorntonsummer}, $\Phi$ is the genetic relationship matrix whose elements are defined as 
$\Phi_{i,j} =\frac{1}{S} \sum_{s=1}^S \nicefrac{(G_{i,s} - p_s)(G_{j,s} - p_s)}{p_s(1 - p_s)}$.
Then,~\citet{price2006principal} compute the top $K$ ($10$ suggested) principal components of $\Phi$ to use as the axes of variation due to the population structure.
The eigenvectors of $\Phi$ are the left eigenvectors of $\hat{G}$ such that $\Phi = \hat{G}\hat{G}^T$ which capture independent axes of variation of individuals.

\citet{price2006principal} exploit the idea that if a \gls{snp} aligns with some of the axes of variation, this is due to the population structure.
These axes of variation are the top $K$ eigenvectors $U$ of $\phi = \hat{G}\hat{G}^T \approx U\Lambda U^\top$, where $U\in \mathbb{R}^{N\times K}$, $\Phi\in\mathbb{R}^{N\times N}$ and $\Lambda \in \mathbb{R}^{K\times K}$.
Here, $U$ are also the left singular vectors of $\hat{G} \approx U\Sigma V^T$ where $\Sigma\in \mathbb{R}^{K\times K}$ is diagonal, and $V\in \mathbb{R}^{S\times K}$.
We use $\approx$ to denote that the chosen $K$ eigenvectors explain the variation due to population structure; what remains are random mutations.

Let the $s$th \gls{snp} be $\hat{G}_{\cdot, s}\in \mathbb{R}^{N}$, which is a column in $\hat{G}$.
In \citet{price2006principal}, population structure in the $s$th \gls{snp} is captured in $\hat{G}_{\cdot, s}^\top U$.
In words, projecting the \gls{snp} $\hat{G}_{\cdot, s}$ onto the axes of variation in individuals gives the population structure between $s$th \gls{snp} and the outcome.
This projection $\hat{G}_{\cdot, s}^\top U$ is a row of $\hat{G}^\top U\in \mathbb{R}^{S\times K}$.
In turn, $\hat{G}^\top U\in \mathbb{R}^{S\times K}$ is the population structure in all \glspl{snp}.
Projecting this population structure onto the genotype of an individual gives the confounding due to population structure amongst the \glspl{snp} present in the genotype.
With $G_{j, \cdot}\in \{0,\nicefrac{1}{2}, 1\}^{S}$ as the genotype for an individual $j$, this projection is $\left((\hat{G}^\top U)^\top G_{j, \cdot}\right)$.
However, $\hat{G} \approx U\Sigma V^T$ implies that $\hat{G}^\top U\approx V\Sigma$.
Reflecting this, $h(\mbt) = \Sigma V^T \mbt$ is the functional confounder for an individual $\mbt$.

\newpage
\subsection{Expanded results}\label{appsec:real-exp}
In~\cref{apptab:snps-est}, we list the $13$ \glspl{snp} recovered by \gls{ours}, that have been previously reported as relevant to Celiac disease. In~\cref{fig:tp-fn-rate}, we plot the true positive and false negative rate amongst \glspl{snp} deemed relevant by \gls{ours}.
The ground truth here are the \glspl{snp} reported associated with celiac disease in prior literature. \hspace*{\fill}
\begin{wraptable}{r}{0.45\textwidth}
   \centering
   \begin{tabular}{ c c c }
     \toprule
    \gls{snp} & \textsc{Effect} & \textsc{Lasso Coef.} \\ 
    \midrule
    rs3748816 & $0.12$ & $0.20$ \\ 
 rs10903122 & $0.10$ & $0.17$ \\ 
 rs2816316 & $0.11$ & $0.20$ \\ 
 rs13151961 & $0.17$ & $0.32$ \\ 
 rs2237236 & $0.17$ & $0.00$ \\ 
 rs12928822 & $0.14$ & $0.29$ \\
 rs2187668 & $-0.70$ & $-2.37$ \\ 
 rs2327832 & $-0.12$ & $-0.20$ \\ 
 rs1738074 & $-0.16$ & $-0.23$ \\ 
 rs11221332 & $-0.15$ & $-0.24$ \\ 
 rs653178 & $-0.13$ & $-0.21$ \\ 
 rs4899260 & $-0.12$ & $-0.19$ \\ 
 rs17810546 & $-0.12$ & $-0.20$ \\ 
    \bottomrule \\
  \end{tabular}
    \caption{Full list of \glspl{snp} previously reported as relevant that were recovered by \gls{ours}, and their
    estimated effects and Lasso coefficients for \glspl{snp}. The effect threshold here is $0.1$.}
   \label{apptab:snps-est}
\end{wraptable}
\begin{wrapfigure}[15]{r}{0.45\textwidth}
\centering
\includegraphics[width=0.5\textwidth]{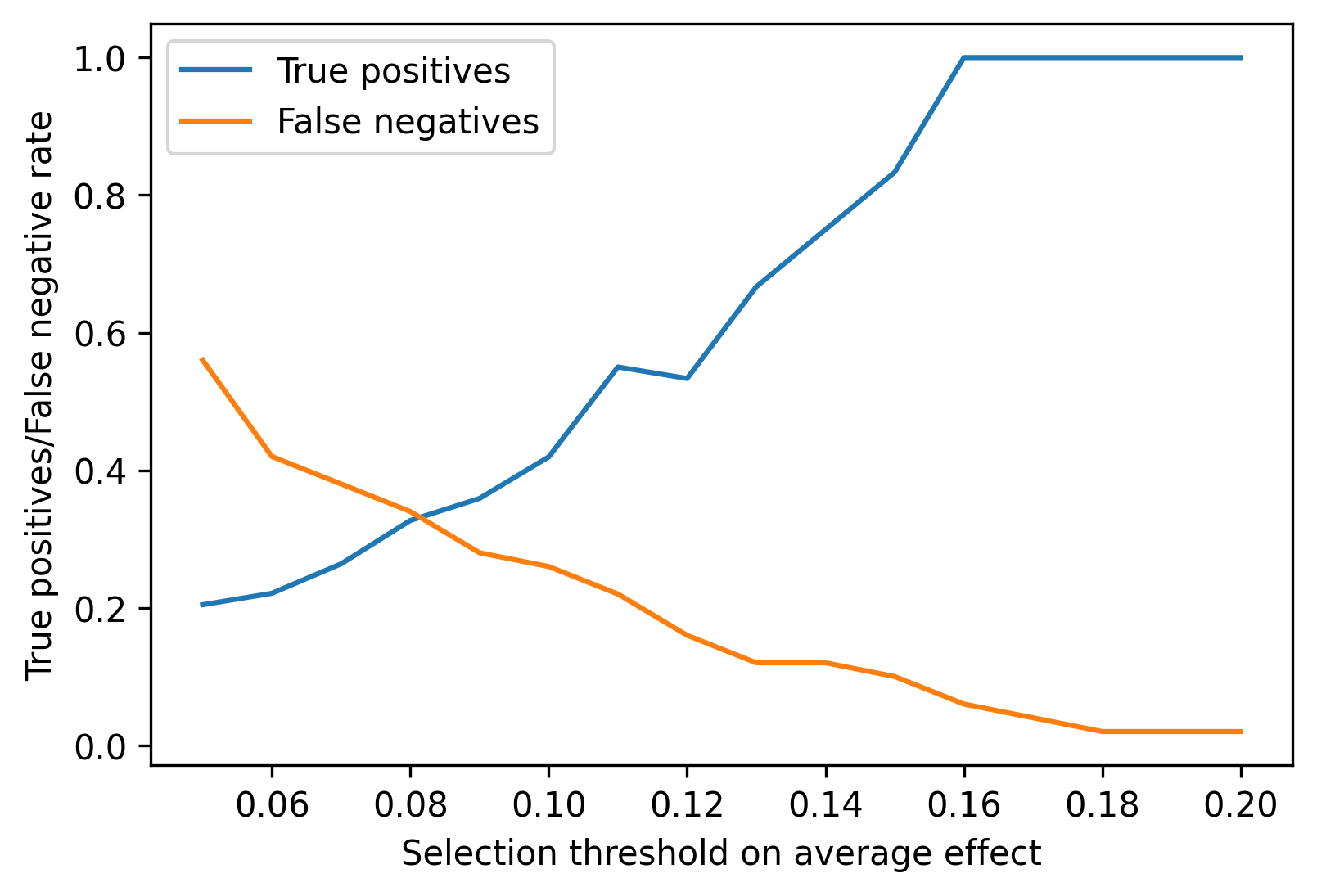}
\caption{
  True positive vs. False negative rate as we vary the threshold on average effects, that determines which \snps{} \gls{ours} deems relevant to the outcome.
}
\label{fig:tp-fn-rate}
\end{wrapfigure}

\end{document}